
\documentclass[conference]{ieeeconf}
\IEEEoverridecommandlockouts
\usepackage{cite}
\usepackage{amsmath,amssymb,amsfonts}
\usepackage{graphicx}
\usepackage{textcomp}
\usepackage{xcolor}
\def\BibTeX{{\rm B\kern-.05em{\sc i\kern-.025em b}\kern-.08em
    T\kern-.1667em\lower.7ex\hbox{E}\kern-.125emX}}

\usepackage{graphicx}
\graphicspath{{../figures/}}
\usepackage{hhline}
\usepackage{caption,subcaption}
\usepackage{url}
\usepackage{enumerate}


\usepackage{algorithm}
\usepackage{algpseudocode}
\usepackage{algorithmicx}


\newcommand{\elis}[1]{\normalsize{\color{red}(Elis:\ #1)}}

\newcommand{\M}{\mathcal{M}}

\newcommand{\RR}{\mathbb{R}}

\newcommand{\ZZ}{\mathbb{Z}}

\newcommand{\depth}{\mathrm{depth}}

\newtheorem{definition}{Definition}
\newtheorem{theorem}{Theorem}
\newtheorem{lemma}{Lemma}

\newtheorem{remark}{Remark}

\newtheorem{proposition}{Proposition}

\usepackage{hyperref}

\def\longversion{1} 

\usepackage{tikz}
\usetikzlibrary{shapes.geometric} 
\usetikzlibrary{calc}
\usetikzlibrary{fit}
\usetikzlibrary{arrows,automata}

\tikzstyle{process} = [rectangle, minimum width=3cm, minimum height=1cm, text centered, text width=3cm, draw=black]
\tikzstyle{decision} = [diamond, minimum width=3cm, minimum height=1cm, text badly centered, draw=black]
\tikzstyle{cloud} = [draw, ellipse, node distance=3cm, minimum height=2em]
\tikzstyle{startstop} = [rectangle, rounded corners, minimum width=1cm, minimum height=1cm,text centered, draw=black]
\tikzstyle{arrow} = [thick,->,>=stealth]

\tikzstyle{block} = [draw, rectangle, 
    minimum height=3em, minimum width=2.9cm, text centered]
\tikzstyle{sum} = [draw, fill=blue!20, circle, node distance=1cm]
\tikzstyle{input} = [coordinate]
\tikzstyle{output} = [coordinate]
\tikzstyle{pinstyle} = [pin edge={to-,thin,black}]

\begin{document}

\title{\LARGE \bf
Hierarchical Finite State Machines for Efficient Optimal Planning in Large-scale Systems
}


\author{Elis Stefansson$^{1}$ and Karl H. Johansson$^{1}$
\thanks{$^{1}$School of Electrical Engineering and Computer Science, KTH Royal Institute of Technology, Sweden. Email:
        {\tt\small \{elisst,kallej\}@kth.se}. The authors are also affiliated with Digital Futures.}%
\thanks{This work was partially funded by the Swedish Foundation for Strategic Research, the Swedish Research Council, and the Knut och Alice Wallenberg~foundation.}
}

\maketitle

\maketitle

\begin{abstract}
In this paper, we consider a planning problem for a hierarchical finite state machine (HFSM) and develop an algorithm for efficiently computing optimal plans between any two states. The algorithm consists of an offline and an online step. In the offline step, one computes exit costs for each machine in the HFSM. It needs to be done only once for a given HFSM, and it is shown to have time complexity scaling linearly with the number of machines in the HFSM. In the online step, one computes an optimal plan from an initial state to a goal state, by first reducing the HFSM (using the exit costs), computing an optimal trajectory for the reduced HFSM, and then expand this trajectory to an optimal plan for the original HFSM. The time complexity is near-linearly with the depth of the HFSM. It is argued that HFSMs arise naturally for large-scale control systems, exemplified by an application where a robot moves between houses to complete tasks. We compare our algorithm with Dijkstra's algorithm on HFSMs consisting of up to 2 million states, where our algorithm outperforms the latter, being several orders of magnitude faster.








\end{abstract}


\section{Introduction}

\subsection{Motivation}\label{motivation}

Large-scale control systems are becoming ubiquitous as we move towards smarter and more connected societies. Therefore, analysing and optimising the performance of such systems is of outmost importance. One common approach to facilitate the analysis of a large-scale system is to break up the system into subsystems, analyse these subsystems separately, and then infer the performance of the whole system. This ideally ease the analysis and enables reconfiguration (e.g., if one subsystem is changed, the whole system does not need to be reanalysed).

One framework suited for modelling systems made of subsystems is the notion of a hierarchical finite state machine (HFSM). Originally introduced by Hashel \cite{harel1987statecharts}, an HFSM is a machine composed of several finite state machines (FSMs) nested into a hierarchy. The motivation is to conveniently model complex systems in a modular fashion being able to represent and depict subsystems and their interaction neatly.

In this paper, we are interested in how to optimally plan in HFSMs. As an example, consider a robot moving between warehouses as in Fig. \ref{fig:motivating_example}. In each warehouse, there are several locations the robot can go to, and at each location the robot can do certain tasks (e.g., scan a test tube), with decision costs given by some cost functional. This system can naturally be modelled as an HFSM with a hierarchy consisting of three layers (warehouse, location and task layer). A key question is then how to design efficient planning algorithms for such HFSMs that take into account the hierarchical structure of the system, seen as a first step towards more efficient planning algorithms for large-scale systems in general.

\subsection{Contribution}
We consider optimal planning in systems modelled as HFSMs and present an efficient algorithm for computing an optimal plan between any two states in the HFSM. More precisely, our contributions are three-fold:

Firstly, we extend the HFSM formalism in \cite{biggar2021modular} to the case when machines in the hierarchy have costs, formalised by Mealy machines (MMs) \cite{Mealy1955}, and call the resulting hierarchical machine a hierarchical Mealy machine (HiMM).

Secondly, we present an algorithm for efficiently computing optimal plans between any two states in an HiMM. The algorithm consists of an offline step and an online step. In the offline step, one computes exit costs for each MM in the HiMM. It needs to be done only once for a given HiMM, and it is shown to have time complexity scaling linearly with the number of machines in the HiMM, able to handle large systems. In the online step, one computes an optimal plan from an initial state to a goal state, by first reducing the HiMM (using the exit costs), computing an optimal trajectory to the reduced HiMM, and then expand this trajectory to an optimal plan for the original HiMM. The partition into an offline and online step enables rapid computations of optimal plans by the online step. Indeed, it is shown that the online step obtains an optimal trajectory to the reduced HiMM in time $O(\mathrm{depth}(Z) \log ( \mathrm{depth}(Z) ) )$, where $\mathrm{depth}(Z)$ is the depth of the hierarchy of the considered HiMM $Z$, and can then use this trajectory to retrieve the next optimal input of the original HiMM in time $O(\mathrm{depth}(Z))$, or obtain the full optimal plan $u$ at once in time $O(\mathrm{depth}(Z) |u|)$, where $|u|$ is the length of $u$. This should be compared with Dijkstra's algorithm which could be more than exponential in $\mathrm{depth}(Z)$ \cite{Dijkstra1959, DijkstraFibonacci}.

Thirdly, we show-case our algorithm on the robot application introduced in the motivation and validate it on large hierarchical systems consisting of up to 2 million states, comparing our algorithm with Dijkstra's algorithm. Our algorithm outperforms the latter, where the partition into an offline and online step reduces the overall computing time, and the online step computes optimal plans in just milliseconds compared to tens of seconds using Dijkstra's~algorithm.


\begin{figure}[t]
	\centering
  \includegraphics[width=0.4\textwidth]{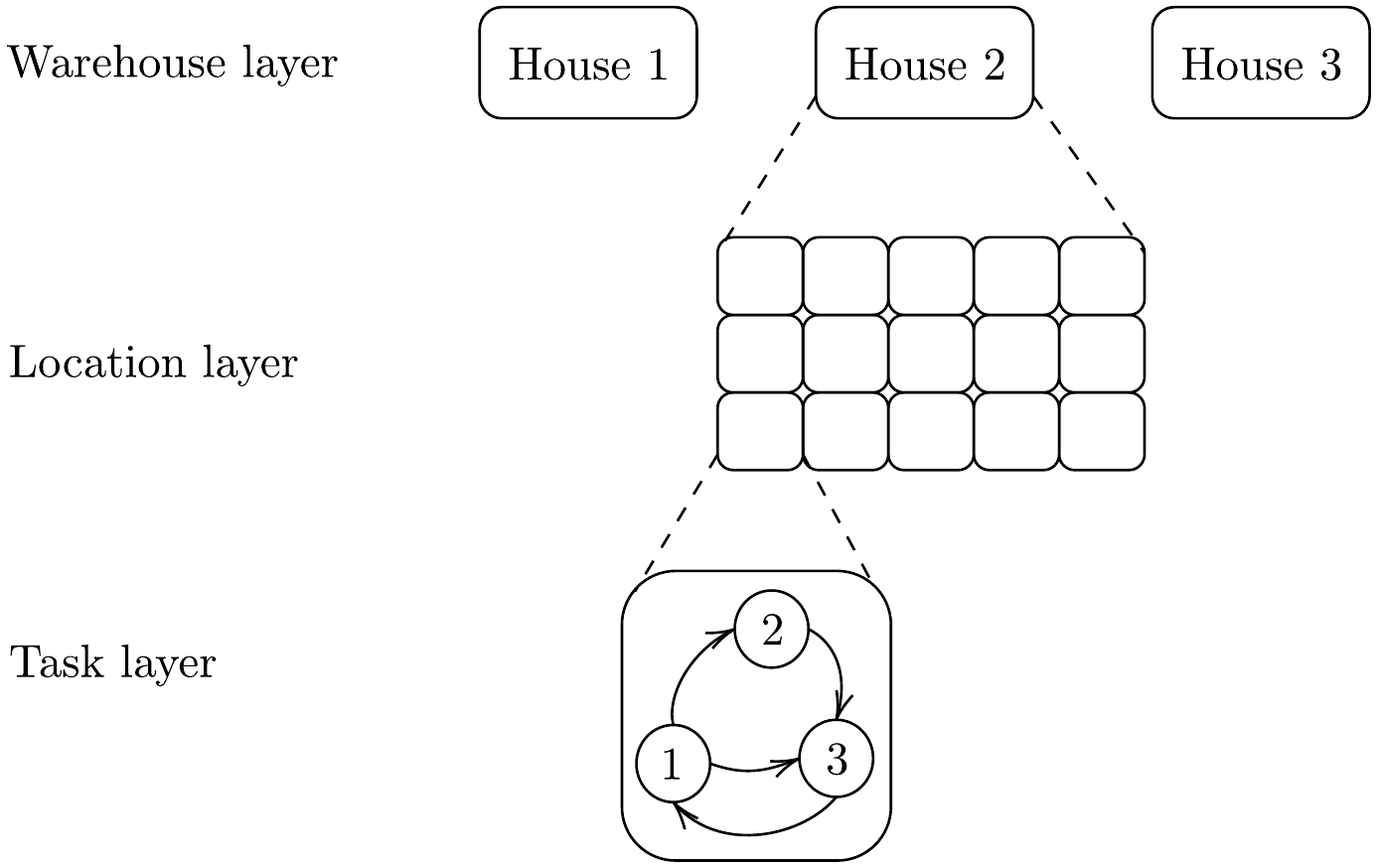}
  \caption{A mobile robot moving between warehouses modelled as a three-layer HFSM.} 
  \label{fig:motivating_example}
\end{figure}

\subsection{Related Work}
Traditionally, HFSMs has been used to model reactive agents, such as wrist-watches \cite{harel1987statecharts}, rescue robots \cite{schillinger2016human} and non-player characters in games \cite{millington2018artificial}. Here, the response of the agent (the control law) is represented as an HFSM reacting to inputs from the environment (e.g., ``hungry''), where subsystems typically correspond to subtasks (e.g, ``get food''). This paper differs from this line of work by instead treating the environment as an HFSM where the agent can \emph{choose} the inputs fed into the HFSM (i.e., inputs are now decision variables), with aim to steer the system to a desirable~state. In discrete event systems \cite{cassandras2008introduction}, a variant of HFSMs known as state tree structures has been used to compute safe executions \cite{ma2006nonblocking, wang2020real}. We differ in the HFSM formalism and focus instead on optimal planning with respect to a cost functional.

There is an extensive literature when it comes to path planning in discrete systems \cite{bast2016route,lavalle2006planning}. Hierarchical path planning algorithms, e.g., \cite{dibbelt2016customizable,mohring2007partitioning,10.1145/1498698.1564502}, are the ones most reminiscent to our approach due to their hierarchical structure. Related algorithms consider path planning on weighted graphs, pre-arranging the graph into clusters to speed up the search, and could be used to plan FSMs. However, to apply such methods for an HFSM, we would first need to flatten the HFSM to an equivalent flat FSM, making the algorithm agnostic to the modular structure of the HFSM (and thus less suitable for instance reconfigurations in the HFSM), and could also in the worst case (when reusing identical components of the HFSM, see \cite{alur1998model, yannakakis2000hierarchical} for details) cause an exponential time complexity. It is therefore beneficial to instead consider path planning in the HFSM directly. This is done in this work. The work \cite{timo2014reachability} seeks an execution of minimal length between two configurations in a variant of an HFSM. This paper differ in the HFSM formalism and consider non-negative transition costs instead of just minimal length.






Finally, the work \cite{biggar2021modular} formalises HFSMs without outputs, and uses a modular decomposition to decompose an FSM into an equivalent HFSM. Planning is not considered. In this work, we extend the HFSM formalism from \cite{biggar2021modular} to HFSMs with outputs (modelling costs) and consider optimal~planning.

\subsection{Outline}
The outline of the paper is as follows. Section \ref{problem_formulation} formally defines HiMMs and formulates the problem statement. Section \ref{hierarchical_planning} presents our planning algorithm. Section \ref{numerical_evaluations} validate our algorithm in numerical evaluations. Finally, Section \ref{conclusion} concludes the paper. \if\longversion0 An extended version of this paper can be found at \cite{stefansson2023ecc} that contains all the proofs in Appendix.\else
All proofs can be found in Appendix.
\fi

\section{Problem Formulation}\label{problem_formulation}

\subsection{Hierarchical Finite State Machines}
We follow the formalism of \cite{biggar2021modular} closely when defining our hierarchical machines, extending their setup to the case when machines also have outputs. Formally, we consider Mealy machines \cite{Mealy1955} and then define hierarchical Mealy~machines.

\begin{definition}[Mealy Machine]\label{FSM_def}
An MM is a tuple $M = (Q,\Sigma,\Lambda,\delta,\gamma,s)$, where $Q$ is a finite set of states; $\Sigma$ is a finite set of inputs, the input alphabet; $\Lambda$ is a finite set of outputs, the output alphabet; $\delta: Q \times \Sigma \rightharpoonup Q$ is the transition function, which can be a partial function\footnote{We use the notation $f : A \rightharpoonup B$ to denote a partial function from a set $A$ to a set $B$ (i.e., a function that is only defined on a subset of A). If $f(a)$ with $a \in A$ is not defined, then we write $f(a) =\emptyset$.};  $\gamma: Q \times \Sigma \rightarrow \Lambda$ is the output function; and $s \in Q$ is the start state.
\end{definition}
An MM $M$ works as follows. When initialised, $M$ starts in the start state $s$. Next, given a current state $q \in Q$ and input $x \in \Sigma$, $M$ outputs $\gamma(q,x) \in \Lambda$, and transits to the state $\delta(q,x)$ if $\delta(q,x) \neq \emptyset$ (i.e., if $\delta(q,x)$ is defined), otherwise $M$ stops. Repeating this process results in a trajectory of $M$:
\begin{definition}[Trajectory]
A sequence $z = \{(q_i,x_i)\}_{i=1}^N$ with $N \in \ZZ^+$ is a \emph{trajectory} of an MM $M = (Q,\Sigma,\Lambda,\delta,\gamma,s)$ (starting at $q_1 \in Q$) if $q_{i+1} = \delta(q_i,x_i) \neq \emptyset$ for $i \in \{1,\dots,N-1 \}$.
\end{definition}
In this work, we assume that we can \emph{choose} the inputs. In such settings, we also talk about plans and their corresponding induced trajectories. More precisely, a \emph{plan} is a sequence $u = (x_i)_{i=1}^N \in \Sigma^N$ with $N \in \ZZ^+$, and we call $z = \{(q_i,x_i)\}_{i=1}^N$ the \emph{induced trajectory} to $u$ starting at $q_1 \in Q$ if $z$ is a trajectory.
Finally, we sometimes use the notation $Q(M)$, $\Sigma(M)$, $\Lambda(M)$, $\delta_M$, $\gamma_M$ and $s(M)$ to stress that e.g., $Q(M)$ is the set of states of~$M$.


\begin{remark}
Here, $\delta$ is a partial function to model stops in the machine. In the hierarchical setup, this allows higher-layer machines in the hierarchy to be called when lower-layer machines are completed. See \cite{biggar2021modular} for a detailed account.
\end{remark}


\begin{definition}[Hierarchical Mealy Machine]\label{HFSM_output_def}
An HiMM is a pair $Z = (X,T)$, where $X$ is a set of MMs with input set $\Sigma$ and output set $\Lambda$ (the MMs in $Z$), and $T$ is a tree with the MMs in $X$ as nodes (specifying how the MMs in $X$ are composed in $Z$). More precisely, each node $M \in X$ in $T$ has $|Q(M)|$ labelled outgoing arcs $\{M \xrightarrow{q} M_q \}_{q \in Q(M)}$, where either $M_q \in X$ (meaning that state $q$ of $M$ corresponds to the MM $M_q$ one layer below in the hierarchy of $Z$) or $M_q = \emptyset$ (meaning that $q$ is just a state without refinement). For brevity, call $Q_Z := \cup_{M \in X} Q(M)$ the nodes of $Z$ and $S_Z := Q_Z \cap \{q: M_q = \emptyset \}$ the states of $Z$. The depth of $Z$, $\mathrm{depth}(Z)$, is the depth of the tree $T$, i.e., the maximum over all directed path lengths in $T$. Furthermore, we also have notions of the start state, the transition function and the output function (where $(X_i \xrightarrow{v} X_j) \in T$ means that there is an arc labelled $v$ from $X_i$ to $X_j$ in $T$):
\begin{enumerate}[(i)]
\item Start function: The function $\mathrm{start}: X \rightarrow S_Z$ is
\begin{equation*}
\mathrm{start}(X_i) =
\begin{cases}
\mathrm{start}(X_j), &  \textrm{$(X_i \xrightarrow{s(X_i)} X_j) \in T$}, X_j \in X \\
s(X_i), & \textrm{otherwise.}
\end{cases}
\end{equation*}
\item Hierarchical transition function: Let $q \in Q(X_j)$, where $X_j \in X$, and $v = \delta(q,x)$. Then the hierarchical transition function $\psi: Q_Z \times \Sigma \rightharpoonup S_Z$ is defined as
\begin{equation*}
\psi(q,x) =
\begin{cases} 
\mathrm{start}(Y), & v \neq \emptyset, (X_j \xrightarrow{v} Y) \in T, Y \in X \\
v, & v \neq \emptyset, \mathrm{otherwise} \\
\psi(w,x), & v = \emptyset, (W \xrightarrow{w} X_j) \in T, W \in X \\
\emptyset, & v = \emptyset, \mathrm{otherwise.}
\end{cases}
\end{equation*}
\item Hierarchical output function: Let $q \in Q(X_j)$ where $X_j \in X$ and $v = \delta(q,x)$. Then the hierarchical output function $\chi: Q_Z \times \Sigma \rightharpoonup \Lambda$ of $Z$ is defined as
\begin{equation*}
\chi(q,x) =
\begin{cases} 
\gamma_{X_j}(q,x), & v \neq \emptyset, (X_j \xrightarrow{v} Y) \in T, Y \in X \\
\gamma_{X_j}(q,x), & v \neq \emptyset, \mathrm{otherwise} \\
\chi(w,x), & v = \emptyset, (W \xrightarrow{w} X_j) \in T, W \in X \\
\emptyset, & v = \emptyset, \mathrm{otherwise.}
\end{cases}
\end{equation*}
\end{enumerate}
\end{definition}

An HiMM $Z = (X,T)$ works analogously to an MM. When initialised, the HiMM starts at state $\mathrm{start}(M_0) \in S_Z$ (where $M_0$ is the root of $T$). Next, given a current state $q \in S_Z$ and input $x \in \Sigma$, $Z$ outputs $\chi(q,x) \in \Lambda$, and transits to the state $\psi(q,x) \in S_Z$ if $\psi(q,x) \neq \emptyset$, otherwise $Z$ stops. Furthermore, a trajectory, plan, and induced trajectory are (with obvious modifications) defined totally analogously as for an~MM. 


\begin{figure}[t]
\centering
\includegraphics[width=1\linewidth]{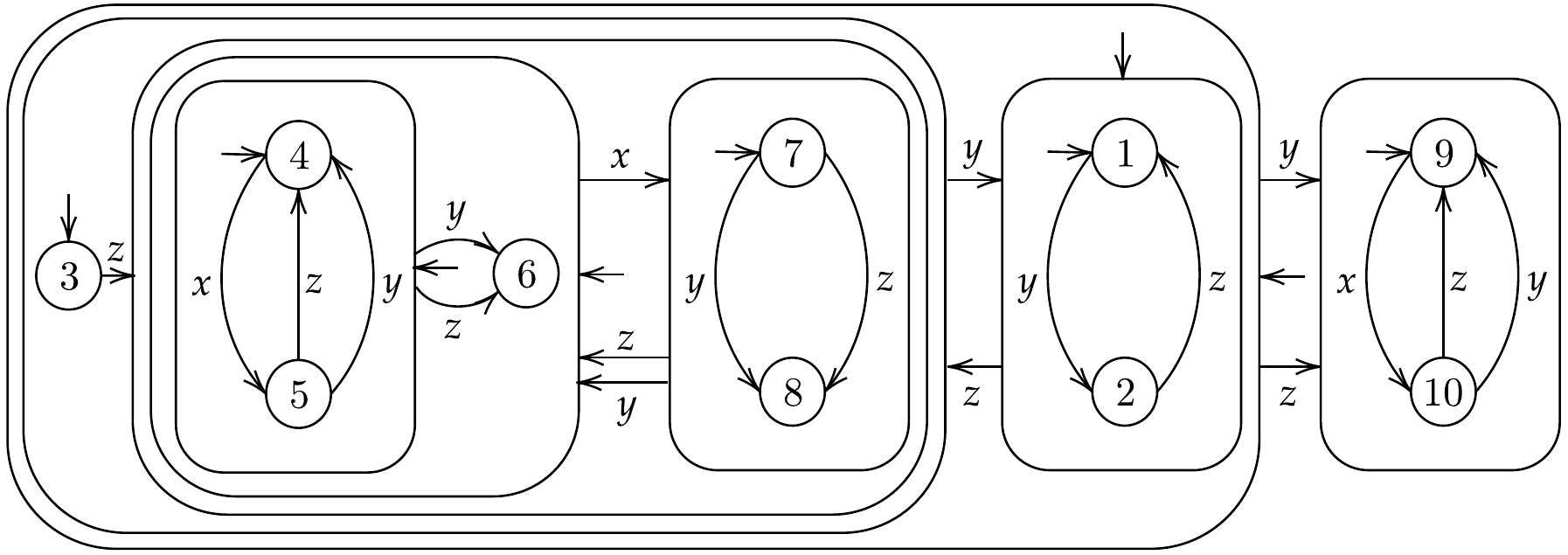}
\caption{Example of an HiMM taken from \cite{biggar2021modular} but with added outputs, assumed to be unit costs.}
\label{fig:HFSM_example}
\end{figure}

\begin{remark}[Intuition]
For intuition regarding Definition \ref{HFSM_output_def}, consider the HiMM $Z = (X,T)$ in Fig. \ref{fig:HFSM_example}. The HiMM is from \cite{biggar2021modular} but with added outputs, for simplicity assumed to be all unit costs, hence, we omit writing them. Here, the states of $Z$ are the circles labelled 1 to 10, while the nodes of $Z$ are the circles plus the (unlabelled) rounded rectangles, which we denote by the states they contain, e.g., $\{9,10\}$. Furthermore, the inputs are $\Sigma = \{x,y,z\}$, where labelled arrows denote corresponding transitions, and small arrows specify start states. For example, in the MM having nodes $\{4,5\}$ and $6$ as states, one starts in $\{4,5 \}$ and can e.g., transition from $\{4,5 \}$ to $6$ with input $y$. To get intuition concerning the hierarchical transition function $\psi$, consider the case when $Z$ is in state 2 and apply input $y$. If there would have been a $y$-transition from 2, then we would have just moved according to that transition. However, this is not the case and hence, we instead move up iteratively in the hierarchy until we find a node that has a $y$-transition (or stop $Z$ if we don't find one). In this case, the node $\{1,2\}$ (just above 2 in the hierarchy) does not support a $y$-transition either, but the node $\{1,\dots,8\}$ (above $\{1,2\}$) does and we therefore move according to that transition, i.e., to $\{9,10\}$. Once moved, we iteratively follow the start states down in hierarchy until we arrive at a state of $Z$, in this case from $\{9,10\}$ to $9$. With this step, the procedure is complete, that is, $\psi(2,y)=9$. Moreover, the output $\chi(2,y)$ captures the transition corresponding to the $y$-transition going from $\{1,\dots,8\}$ with $y$, i.e., $\chi(2,y) = \gamma_{M}(\{1,\dots,8\},y)$, where $M$ is the MM in $Z$ having the node $\{1,\dots,8\}$ as state. The motivation is to support modularity neatly, where we in our case care about that we moved in $M$ from $\{1,\dots,8\}$ with $y$, but not how we arrived at $\{1,\dots,8\}$ (with $y$) from layers below. Future work will consider variations of this setup.
Finally, we also depict the HiMM in a tree-like structure given by Fig. \ref{fig:hfsm_example_tree_cut}, useful when illustrating the planning algorithm given in the next section.
\end{remark}

\begin{figure}[t]
\centering
\includegraphics[width=0.9\linewidth]{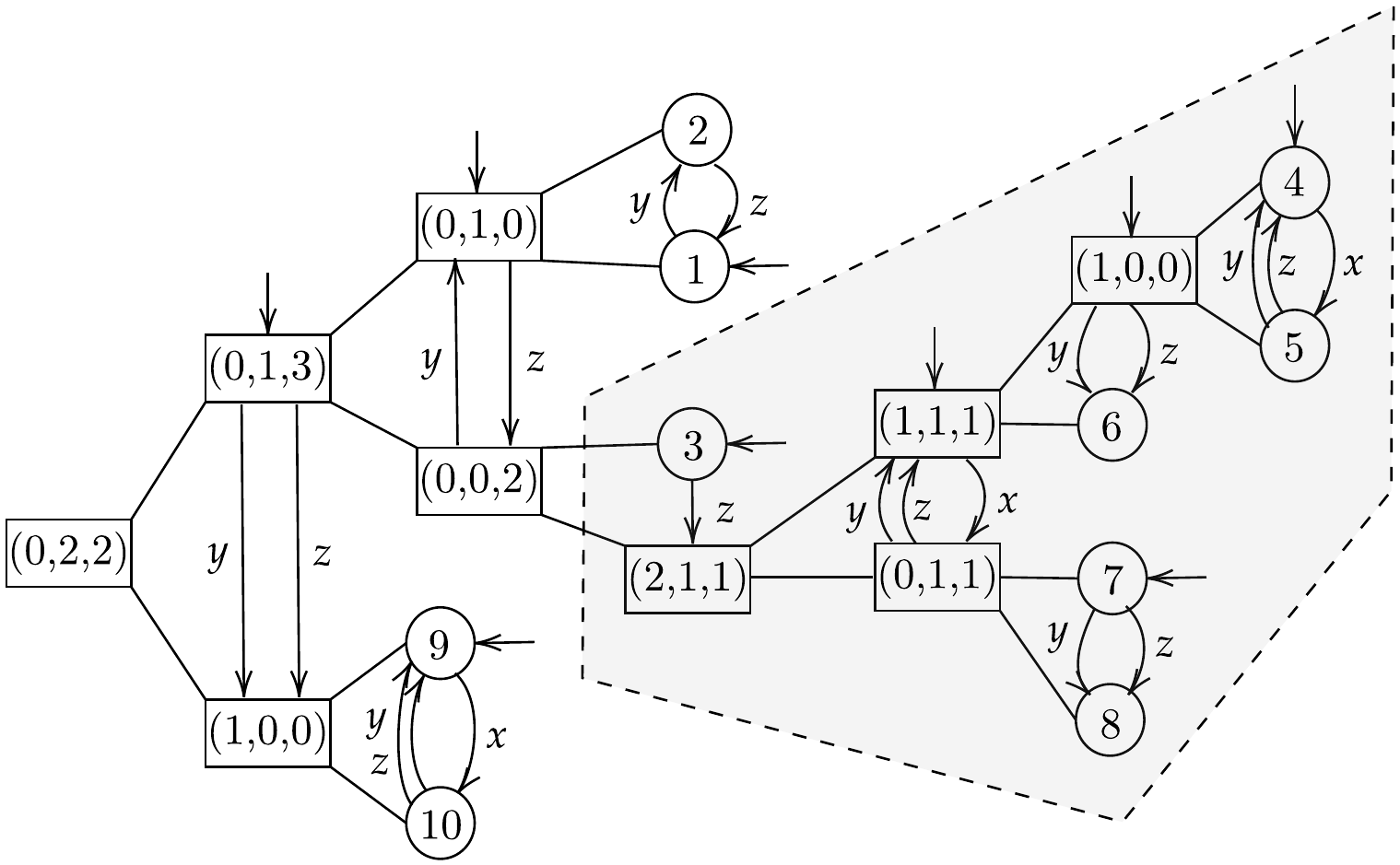}
\caption{Tree representation of the HiMM in Fig. \ref{fig:HFSM_example}. The tuple on an MM $M$ is the optimal exit costs $(c^M_x,c^M_y,c^M_z)$ obtained by the offline step. The light-grey region depicts the part that is removed in the online step with $s_{\mathrm{init}} = 2$ and $s_{\mathrm{goal}} = 10$.}
\label{fig:hfsm_example_tree_cut}
\end{figure}

In this paper, we exclusively consider HiMMs such that $\Lambda \subset \RR^+$ (henceforth implicitly assumed), interpreting $\chi(q,x)$ as the \emph{cost} for executing $x$ at node $q$. For such an HiMM $Z$, we also define a cumulative cost. More precisely, for a trajectory $z = \{(q_i,x_i)\}_{i=1}^N$ of $Z$, the cumulative cost is $C(z) = \sum_{i=1}^N \chi(q_i,x_i)$ if all $\chi(q_i,x_i) \neq \emptyset$, and $C(z) = + \infty$ otherwise.\footnote{In fact, since $z$ is a trajectory, only $\chi(q_N,x_N)$ may be empty.} We put $C(z) = + \infty$ in the latter case since we do not want $Z$ to stop.

\subsection{Problem Statement}
We now formalise the problem statement. Let $Z = (X,T)$ be an HiMM such that $\Lambda \subset \RR^+$. Consider state $s_{\mathrm{init}} \in S_Z$, called the \emph{initial state}, and $s_{\mathrm{goal}} \in S_Z$ called the \emph{goal state} (any states of $Z$). Let $\Psi(s_\mathrm{init},s_\mathrm{goal})$ be the set of plans $u =(x_i)_{i=1}^N$ that ends at $s_\mathrm{goal}$ if starting at $s_\mathrm{init}$, i.e., the induced trajectory $z = \{(q_i,x_i)\}_{i=1}^N$ to $u$ starting at $q_1 = s_\mathrm{init}$ exists (i.e., $z$ is a trajectory) and $\psi(q_N,x_N)= s_\mathrm{goal}$. Then, find a plan $\hat{u} \in \Psi(s_\mathrm{init},s_\mathrm{goal})$ that minimises the cumulative cost
\begin{equation*}
\min_{u \in \Psi(s_\mathrm{init},s_\mathrm{goal})} C(z),
\end{equation*} 
where $z$ is the induced trajectory to $u$ starting at $s_\mathrm{init}$. We call such a $\hat{u}$ an \emph{optimal plan} to the \emph{planning objective} $(Z,s_\mathrm{init},s_\mathrm{goal})$, and the induced trajectory $\hat{z}$ an \emph{optimal trajectory} to $(Z,s_\mathrm{init},s_\mathrm{goal})$. Moreover, a plan $u \in \Psi(s_\mathrm{init},s_\mathrm{goal})$ is called a \emph{feasible plan} to $(Z,s_\mathrm{init},s_\mathrm{goal})$, and we say that $(Z,s_\mathrm{init},s_\mathrm{goal})$ is \emph{feasible} if there exists a feasible plan.

\section{Hierarchical planning}\label{hierarchical_planning}
In this section, we present our hierarchical planning algorithm, providing an overview in Section \ref{hierarchical_planning_overview}, followed by details in Sections \ref{optimal_exit_table} to \ref{online_step_algorithm}. We fix an arbitrary planning objective $(Z,s_\mathrm{init},s_\mathrm{goal})$ throughout this section.

\subsection{Overview}\label{hierarchical_planning_overview}
The hierarchical planning algorithm finds an optimal plan $u$ to $(Z,s_\mathrm{init},s_\mathrm{goal})$. The algorithm is summarised by Algorithm \ref{alg:hierarchical_planning} and consists of an offline step and an online step. The offline step computes optimal exit costs $(c_x^M)_{x \in \Sigma}$ and corresponding trajectories $(z_x^M)_{x \in \Sigma}$ for each MM $M \in X$ (line 2). This step needs to be done only once for a given HiMM $Z$. The online step then computes an optimal plan $u$ to $(Z,s_\mathrm{init},s_\mathrm{goal})$ using the result from the offline step. More precisely, it first reduces $Z$ to an equivalent reduced HiMM $\bar{Z}$ (equivalence given by Theorem \ref{theorem:planning_equivalence}), pruning all irrelevant MMs of $Z$ and replacing them with corresponding costs from the offline step, and then finds an optimal trajectory $z$ to $\bar{Z}$ (line 4). Then, it expands $z$ to an optimal trajectory for $(Z,s_\mathrm{init},s_\mathrm{goal})$ from which we get the optimal plan $u$ (line 5). The details of the offline and online step is given by Section \ref{optimal_exit_table} and Sections \ref{online_step} to \ref{online_step_algorithm}, respectively, where lines 2, 4 and 5 in Algorithm \ref{alg:hierarchical_planning} correspond to Algorithm \ref{alg:optimal_exit_table}, \ref{alg:Step1} and \ref{alg:step_3}, respectively.

\begin{algorithm}[t]
\caption{Hierarchical planning}\label{alg:hierarchical_planning}
\begin{algorithmic}[1]
\Require HiMM$ \; Z =(X,T)$ and states $s_{\mathrm{init}}, s_{\mathrm{goal}}$.
\Ensure Optimal plan $u$ to $(Z,s_{\mathrm{init}}, s_{\mathrm{goal}})$.
\State \textbf{Offline step:}
\State $(c_x^M,z_x^M)_{x \in \Sigma, M \in X} \gets \mathrm{Offline\_step}(Z)$
\State \textbf{Online step:}
\State $z \gets \mathrm{Reduce\_and\_solve}(Z,s_{\mathrm{init}},s_{\mathrm{goal}},(c_x^M,z_x^M)_{x \in \Sigma, M \in X})$
\State $u \gets \mathrm{Expand}(z,(z_x^M)_{x \in \Sigma, M \in X}, Z)$
\end{algorithmic}
\end{algorithm}

\subsection{Offline Step}\label{optimal_exit_table}
Towards a precise formulation of the offline step, we need the following notions.
A state $q \in S_Z$ is \emph{contained} in an MM $M \in X$ if $q$ is a descendant of $M$ in $T$ (e.g., state 4 is contained in the MM labelled (2,1,1) in Fig. \ref{fig:hfsm_example_tree_cut}). A trajectory $z = \{(q_i,x_i)\}_{i=1}^N$ is an \emph{$(M,x)$-exit trajectory} if every $q_i$ is contained in $M$, $x_N = x$, and $q_{N+1} := \psi(q_N,x_N)$ (possibly empty) is not contained in $M$ (hence, $z$ exits $M$ with $x$). The corresponding \emph{$(M,x)$-exit cost} of $z$ equals $\sum_{i=1}^{N-1} \chi(q_i,x_i)$ (the cost of $(q_N,x_N)$ is excluded since the transition goes outside the subtree with root $M$). The \emph{optimal $(M,x)$-exit cost}, denoted $c_x^M$, is the minimal $(M,x)$-exit cost,\footnote{If no $(M,x)$-exit trajectories exists, then $c_x^M = \infty$ and any trajectory contained in $M$ is said to be an optimal $(M,x)$-exit~trajectory.} and any such $c_x^M$ is called an \emph{optimal exit cost} of~$M$. An $(M,x)$-exit trajectory that achieves the optimal $(M,x)$-exit cost is an \emph{optimal $(M,x)$-exit~trajectory}, and any such trajectory is called an \emph{optimal exit trajectory} of $M$.


We now provide the details of the offline step. The offline step obtains $(c_x^M)_{x \in \Sigma}$ for each MM $M$ in $Z =(X,T)$ recursively over the tree $T$. More precisely, let $M = (Q,\Sigma,\Lambda,\delta,\gamma,s)$ be an MM of $Z$. To compute $(c_x^M)_{x \in \Sigma}$, form the augmented MM $\hat{M}$ given by $\hat{M} = (Q \cup \{ E_x \}_{x \in \Sigma},\Sigma,\Lambda,\hat{\delta},\hat{\gamma},s)$. Here, $\hat{M}$ is identical to $M$ except that whenever an input $x \in \Sigma$ would exit $M$ ($\delta(q,x)=\emptyset$) then one instead goes to the added state $E_x$ in $\hat{M}$. That is, $\hat{\delta}(q,x) := \delta(q,x)$ if $\delta(q,x) \neq \emptyset$ and $\hat{\delta}(q,x) = E_x$ otherwise (the values of $\hat{\delta}(q,x)$ for $q \in \{ E_x \}_{x \in \Sigma}$ are immaterial). Furthermore, $\hat{\gamma}$ is given by $\hat{\gamma}(q,x) := c_x^{q}+\gamma(q,x)$ if $\delta(q,x) \neq \emptyset$ and $\hat{\gamma}(q,x) := c_x^{q}$ otherwise (again, the values of $\hat{\gamma}(q,x)$ for $q \in \{ E_x \}_{x \in \Sigma}$ are immaterial). Here, $c^{q}_x = 0$ if $q \in S_Z$ is a state of $Z$, and $c^{q}_x = c^{M_{q}}_x$ otherwise, where $M_q$ is the MM corresponding to $q$. Thus, intuitively, $\hat{\gamma}(q,x)$ reflects the cost of exiting $q$ with $x$ plus the cost of applying $x$ from $q$ in $M$. Note also that, by recursion (going upwards in the tree $T$), we may assume that all $c_x^{q}$ are already known.

\begin{algorithm}[t]
\caption{Offline\_step}\label{alg:optimal_exit_table}
\begin{algorithmic}[1]
\Require HiMM $Z = (X,T)$ with markings.
\Ensure Computed $(c_x^M,z_x^M)_{x \in \Sigma}$ for each MM $M$ of $Z$
\State $\mathrm{Optimal\_exit}(M_0)$ \Comment{Run from root MM $M_0$ of $T$}
\State $\mathrm{Optimal\_exit}(M)$: \Comment{Recursive help function} 
\For {each state $q$ in $Q(M)$}
\If{$q \in S_Z$}
\State $(c_x^q)_{x \in \Sigma} \gets 0_{|\Sigma|}$
\Else
\State Let $M_q$ be the MM corresponding to $q$.
\State $(c_x^{q},z_x^{q})_{x \in \Sigma} \gets \mathrm{Optimal\_exit}(M_q)$
\EndIf
\EndFor
\State Construct $\hat{M}$
\State $(c_x^M,z_x^M)_{x \in \Sigma}$ $\gets$ Dijkstra($s(M),\{E_x\}_{x \in \Sigma},\hat{M}$)
\State return $(c_x^M,z_x^M)_{x \in \Sigma}$
\end{algorithmic}
\end{algorithm}





We can now obtain $(c_x^M)_{x \in \Sigma}$ by doing a shortest path search in $\hat{M}$ starting from $s$. We use Dijkstra's algorithm \cite{Dijkstra1959, DijkstraFibonacci} computing a shortest-path tree from $s$ until we have reached all $E_x$, and thereby obtained $(c_x^M)_{x \in \Sigma}$ (with $c_x^M = \infty$ if we never reach $E_x$). We also save the corresponding trajectories $(z_x^M)_{x \in \Sigma}$ in $\hat{M}$ that we obtain for free from Dijkstra's algorithm. This procedure is performed recursively over the whole tree $T$ to obtain $(c_x^M)_{x \in \Sigma}$ and $(z_x^M)_{x \in \Sigma}$ for all MMs $M$ of $Z$. The algorithm is given by Algorithm \ref{alg:optimal_exit_table}, with correctness given by Proposition \ref{th:offline_optimal_cost} and time complexity given by Proposition \ref{th:offline_time_complexity}. See Fig.~\ref{fig:hfsm_example_tree_cut} for an example concerning the optimal exit costs.

\begin{proposition}\label{th:offline_optimal_cost}
$c_x^M \in [0,\infty]$ computed by Algorithm \ref{alg:optimal_exit_table} equals the optimal $(M,x)$-exit cost of $Z$.
\end{proposition}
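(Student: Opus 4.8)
The plan is to prove the proposition by induction on the height of the subtree of $T$ rooted at $M$, matching the bottom-up recursion of Algorithm~\ref{alg:optimal_exit_table}. The induction hypothesis is that for every child machine $M_q$ of $M$ (i.e., every $q \in Q(M)$ with $M_q \neq \emptyset$) and every input $x' \in \Sigma$, the value $c_{x'}^{q} = c_{x'}^{M_q}$ returned by the recursive call already equals the optimal $(M_q,x')$-exit cost; for leaf states $q \in S_Z$ we have $c_{x'}^q = 0$, which is correct since a single state is exited at zero cost. The base case is a leaf machine $M$ all of whose states lie in $S_Z$, where the claim reduces to the correctness of Dijkstra on $\hat M$ with weights $\hat\gamma = \gamma$ (internal transitions) and $0$ (into $E_x$).

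The core of the argument is a cost-preserving correspondence between (i) $(M,x)$-exit trajectories of $Z$ entering at $\mathrm{start}(M)$ and (ii) directed paths from $s(M)$ to $E_x$ in $\hat M$. The key observation is that any trajectory contained in $M$ decomposes into macro-steps at the level of $M$: while the trajectory sits at (the resolved start of) a state $q \in Q(M)$, it either applies an input that fires an $M$-level transition $\delta_M(q,x') \neq \emptyset$, or --- if $q$ refines to $M_q$ --- it first wanders inside $M_q$ and fires the $M$-level transition only once an input $x'$ exits $M_q$ (this is exactly the ``move up the hierarchy'' branch of $\psi$). By the induction hypothesis the minimal cost of traversing $M_q$ and exiting it with $x'$ is $c_{x'}^{q}$, and the subsequent $M$-level transition costs $\gamma_M(q,x')$; their sum is precisely $\hat\gamma(q,x') = c_{x'}^{q} + \gamma_M(q,x')$. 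For the terminal macro-step, input $x$ exits $M$ itself ($\delta_M$ undefined there), which in $\hat M$ is the edge into $E_x$ of weight $\hat\gamma = c_x^{q}$; this omits the $M$-level transition cost, matching the exclusion of $\chi(q_N,x_N)$ from the exit cost $\sum_{i=1}^{N-1}\chi(q_i,x_i)$, because that final transition fires above $M$.

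With this correspondence I would prove the two inequalities separately. For ``$\le$'', expanding any $s(M)$-to-$E_x$ path of $\hat M$ by replacing each macro-step through $M_q$ with an optimal $(M_q,x')$-exit trajectory (which exists with finite cost $c_{x'}^q$ whenever the edge weight is finite, by the induction hypothesis) yields a genuine $(M,x)$-exit trajectory of equal cost, so $c_x^M$ is at most the minimal path cost. For ``$\ge$'', restricting any $(M,x)$-exit trajectory to its $M$-level macro-structure yields a valid $\hat M$-path whose weight is at most the trajectory's exit cost, since each within-$M_q$ segment costs at least the optimal $c_{x'}^q$ used in $\hat\gamma$; taking an optimal exit trajectory gives that the minimal path cost is at most $c_x^M$. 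Since $\Lambda \subset \RR^+$ and the $c_{x'}^q$ are nonnegative by the induction hypothesis, all weights of $\hat M$ are nonnegative, so Dijkstra correctly returns the minimal $s(M)$-to-$E_x$ path cost, which therefore equals $c_x^M$; the unreachable case returns $\infty$, consistent with the convention that $c_x^M=\infty$ when no $(M,x)$-exit trajectory exists.

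The main obstacle I anticipate is making the macro-step decomposition rigorous: one must show that the recursive ``go up the hierarchy'' clauses in the definitions of $\psi$ and $\chi$ group a $Z$-trajectory contained in $M$ unambiguously into segments, each living entirely within a single child $M_q$ (or a single leaf state) until an input exits that child, and that feasibility of firing $\delta_M(q,x')$ is equivalent to feasibility of exiting $M_q$ with $x'$. Handling the $\infty$ cases cleanly --- an edge of $\hat M$ carries weight $\infty$ precisely when the corresponding sub-exit is infeasible, so such edges never lie on a finite optimal path --- is the delicate bookkeeping that links the two inequalities and secures the boundary claim $c_x^M \in [0,\infty]$.
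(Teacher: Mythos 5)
Your proposal is correct and follows essentially the same route as the paper: induction over the tree $T$, decomposing an $(M,x)$-exit trajectory into macro-steps that each exit a child $q\in Q(M)$, and matching these against $s(M)$-to-$E_x$ paths in $\hat M$ via the weights $\hat\gamma(q,x') = c_{x'}^q + \gamma_M(q,x')$ (the paper phrases your two inequalities as two contradiction arguments, and likewise leaves the macro-step decomposition at the level of an asserted partition). The only substantive additions on your side are the explicit restriction to trajectories entering at $\mathrm{start}(M)$ and the remark that nonnegativity of the weights justifies Dijkstra, both of which the paper leaves implicit.
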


\begin{proposition}\label{th:offline_time_complexity}
The time complexity of Algorithm \ref{alg:optimal_exit_table} is
\begin{equation*}
O(N [b_s |\Sigma|+(b_s+|\Sigma|) \log(b_s+|\Sigma|)]),
\end{equation*}
where $b_s$ is the maximum number of states in an MM of $Z$, and $N$ is the number of MMs in $Z$.
\end{proposition}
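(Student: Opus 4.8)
The plan is to bound the work performed in processing a single MM and then multiply by the number $N$ of MMs, exploiting that the recursion in Algorithm \ref{alg:optimal_exit_table} visits each MM of $Z$ exactly once. Since $T$ is a tree with the $N$ MMs as nodes, and $\mathrm{Optimal\_exit}$ is invoked on a machine $M_q$ (line 8) only while processing the unique parent of $M_q$, every MM is the argument of $\mathrm{Optimal\_exit}$ exactly once. Thus it suffices to show that the \emph{local} work at a fixed MM $M$ — everything in one invocation of $\mathrm{Optimal\_exit}(M)$ with the recursive subcalls excluded — is $O(b_s|\Sigma| + (b_s+|\Sigma|)\log(b_s+|\Sigma|))$, and then sum over the $N$ machines.

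First I would account for the non-Dijkstra local work. The state loop (lines 3--10) iterates over $|Q(M)| \le b_s$ states; for each it either writes the length-$|\Sigma|$ zero vector (leaf case, line 5) or reads the already-computed tuple $(c_x^{M_q})_{x\in\Sigma}$ returned by the recursive call, so by the recursion all child exit costs are available and each lookup or assignment touches $|\Sigma|$ entries, contributing $O(b_s|\Sigma|)$. Constructing $\hat M$ (line 11) requires specifying $\hat\delta(q,x)$ and $\hat\gamma(q,x) = c_x^{q}+\gamma(q,x)$ (or $c_x^{q}$) over the $\le b_s$ states and $|\Sigma|$ inputs, each in $O(1)$ given the stored child costs, for a further $O(b_s|\Sigma|)$.

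The dominant term is the Dijkstra call (line 12). The augmented machine $\hat M$ has vertex set $Q(M)\cup\{E_x\}_{x\in\Sigma}$, so $|V| = |Q(M)|+|\Sigma| = O(b_s+|\Sigma|)$; its edges are the transitions $\hat\delta(q,x)$, and since $\delta$ (hence $\hat\delta$) is a partial function of the pair $(q,x)$ there is at most one successor per state--input pair, giving $|E| \le |Q(M)|\,|\Sigma| = O(b_s|\Sigma|)$. A Fibonacci-heap implementation of Dijkstra \cite{DijkstraFibonacci} runs in $O(|E| + |V|\log|V|)$, which is exactly $O(b_s|\Sigma| + (b_s+|\Sigma|)\log(b_s+|\Sigma|))$ and absorbs the $O(b_s|\Sigma|)$ terms above; the early termination once all $E_x$ are reached only helps. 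Summing this per-MM bound over the $N$ MMs yields the claimed $O(N[b_s|\Sigma| + (b_s+|\Sigma|)\log(b_s+|\Sigma|)])$.

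The one point requiring care — the main obstacle — is the bookkeeping that prevents double-counting: the recursive call on each child MM must be charged to that child rather than to its parent, so that the per-node bound is multiplied by $N$ and not by something larger. This is precisely what the tree structure guarantees, since every MM has a unique parent and is therefore processed once. I would also make explicit the edge count $|E|\le b_s|\Sigma|$, as this (together with $|V|=O(b_s+|\Sigma|)$) is what makes the Dijkstra term match the stated bound term-for-term.
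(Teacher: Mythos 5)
Your proposal is correct and follows essentially the same argument as the paper: bound the per-machine work (construction of $\hat{M}$ in $O(b_s|\Sigma|)$, Dijkstra on $\hat{M}$ with $|V|=O(b_s+|\Sigma|)$ and $|E|\le b_s|\Sigma|$ via a Fibonacci heap, charging recursive calls to the child), then multiply by $N$. Your explicit justification of the edge bound and the once-per-MM visit is a slightly more careful write-up of the same proof.
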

\begin{remark}
We stress that all the time complexity results in this section (Section \ref{hierarchical_planning}) are based on using Dijkstra's algorithm with a Fibonacci heap, due to the low time complexity, see \cite{DijkstraFibonacci} for details. However, for the systems in the simulations in Section \ref{numerical_evaluations}, we use Dijkstra's algorithm with an ordinary priority queue (that has a slightly higher time complexity) since it is in practice faster for those systems.
\end{remark}

Finally, in the remaining sections, for brevity, let a $(q,x)$-exit trajectory (cost) mean an $(M,x)$-exit trajectory (cost) if $q$ corresponds to the MM $M$. If $q$ does not correspond to an MM, i.e., $q$ is a state of $Z$, then let the $(q,x)$-exit trajectory and cost be simply $(q,x)$ and zero cost, with intuition that one can then only exit $q$ with $x$ by applying $x$ directly. An optimal $(q,x)$-exit trajectory is defined analogously and the corresponding optimal $(q,x)$-exit cost $c^{q}_x$ is as above (readily obtained from the optimal exits costs from Algorithm~\ref{alg:optimal_exit_table}).



\subsection{Online Step: Theory}\label{online_step}
We continue with the online step providing necessary theory in this section, while the algorithm is presented in Section~\ref{online_step_algorithm}.
To this end, let $U_1,\dots,U_n$ be the path of MMs in $T$ from $s_\mathrm{init}$ to the root of $T$, that is, $s_\mathrm{init}$ is a state of the MM $U_1$, the corresponding node of $U_i$ is a state of the MM $U_{i+1}$, and $U_n$ equals the root MM of $T$. Similarly, let $D_1,\dots,D_m$ be the path of MMs of $X$ in $T$ from $s_\mathrm{goal}$ to the root of $T$ (with $s_\mathrm{goal}$ being a state of the MM $D_1$). Note that there exist indices $\alpha$ and $\beta$ such that corresponding nodes of $U_\alpha$ and $D_\beta$ are states in the same MM of $X$. For brevity, let $B := D_\beta$. To get an optimal plan, we consider only the MMs of these two paths, see Fig. \ref{fig:hfsm_example_tree_cut} for an example, formalised by the \emph{reduced HiMM} in Definition \ref{reduced_planning_himm} with equivalence to the original HiMM given by Theorem \ref{theorem:planning_equivalence}. For this result, we need the notion of a reduced trajectory and an optimal expansion. These two notions can be seen as complementary operations: the first reduces a trajectory to a subset of $T$, where the latter can be used to expand a reduced trajectory with respect to a subset of $T$ to the whole tree.

\begin{definition}[Reduced trajectory]
Let $Z = (X,T)$ be an HiMM and consider a connected subset $\M$ of tree-nodes of $T$ that includes the root of $T$. Let $z = (q_i,x_i)_{i=1}^N$ be a trajectory and define the reduced trajectory $z |_\M$ of $z$ with respect to $\M$ as follows. For any state $q \in S_Z$ in some MM $M \in X$, let $\bar{q}$ be the reduced node of $q$ with respect to $\M$; that is, $\bar{q} \in Q(\bar{M})$ contains $q$, where $\bar{M}$ is the MM in $\M$ with minimal path length to $M$ in $T$. Define $\bar{z}_i = (\bar{q}_i,x_i)$ if $\bar{q}_{i+1} \neq \bar{q}_{i}$ and $\bar{z}_i = \emptyset$ otherwise. Then $z |_\M$ is the sequence of nonempty $\bar{z}_i$.
\end{definition}

In words, $z |_\M$ equals all visible transitions seen in $\M$, where $\bar{q}$ gives us the best information of the state $q$ with respect to $\M$, and $z |_\M$ changes whenever this $\bar{q}$ changes.

Next, an \emph{optimal expansion} of a node-input pair $(q,x) \in Q_Z \times \Sigma$ of $Z$ is defined by Algorithm \ref{alg:optimal_expansion}, denote it by $E(q,x)$. An optimal expansion of a sequence $z = (q_i,x_i)_{i=1}^N$ of node-input pairs is on the form $(E(q_1,x_1),\dots,E(q_N,x_N))$.
The following result justifies the notion:

\begin{proposition}\label{proposition:optimal_expansion}
Let $Z = (X,T)$ be an HiMM and $(q,x) \in Q_Z \times \Sigma$. Then $E(q,x)$ is an optimal $(q,x)$-exit trajectory.
\end{proposition}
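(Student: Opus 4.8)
The plan is to prove the statement by structural induction on the height of the subtree of $T$ rooted at the node $q$ (equivalently, on the depth of the MM corresponding to $q$), mirroring the recursive structure of the optimal expansion in Algorithm~\ref{alg:optimal_expansion}. Recall that $E(q,x)$ is built from the saved trajectory $z_x^{M_q}=(p_i,y_i)_{i=1}^k$ that the offline step computed in the augmented MM $\hat{M_q}$, by replacing each pair $(p_i,y_i)$ with its own optimal expansion, so that $E(q,x)=E(p_1,y_1)\cdots E(p_k,y_k)$ with $y_k=x$. The base case is $q\in S_Z$: here $E(q,x)=(q,x)$ by definition, which is exactly the (trivial) optimal $(q,x)$-exit trajectory of cost $c_x^q=0$, so the claim holds.

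For the inductive step, let $M:=M_q$ and assume the claim for every node strictly below $M$ in $T$; in particular, for each state $p$ of $M$ and input $y$, $E(p,y)$ is an optimal $(p,y)$-exit trajectory of cost $c_y^p$. The first and main task is to verify that the concatenation $E(p_1,y_1)\cdots E(p_k,y_k)$ is a genuine trajectory of $Z$, contained in $M$, that exits $M$ with $x$. The crucial local step is to show that, immediately after a sub-trajectory $E(p_i,y_i)$ (for $i<k$) exits its machine $M_{p_i}$ with input $y_i$, the hierarchical transition function of Definition~\ref{HFSM_output_def} lands precisely at $\mathrm{start}(M_{p_{i+1}})$, where $p_{i+1}=\delta_M(p_i,y_i)$, and emits output $\chi=\gamma_M(p_i,y_i)$. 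This is the step I expect to be the main obstacle, and I would argue it as follows: since $E(p_i,y_i)$ exits $M_{p_i}$, the upward search in $\psi$ for a node supporting $y_i$ cannot terminate strictly inside $M_{p_i}$ (otherwise the resulting successor would remain contained in $M_{p_i}$, contradicting "exit"); and since the node $p_i$ of $M$ already supports $y_i$ (because $p_{i+1}=\hat\delta(p_i,y_i)=\delta_M(p_i,y_i)\neq\emptyset$, as $(p_i,y_i)$ is an interior transition of $z_x^{M_q}$), the search stops exactly at $p_i$. Hence $\psi$ takes $\delta_M(p_i,y_i)=p_{i+1}$ and then descends via the start function to $\mathrm{start}(M_{p_{i+1}})$, which is precisely the initial state of $E(p_{i+1},y_{i+1})$, and the output matches by the definition of $\chi$. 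Chaining these identities, and noting $p_1=s(M)$ so the whole expansion starts at $\mathrm{start}(M)$, shows the concatenation is a valid $(q,x)$-exit trajectory.

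It then remains to compare costs. I would compute the $(q,x)$-exit cost of $E(q,x)$ block by block: each block $E(p_i,y_i)$ with $i<k$ contributes its own exit cost $c_{y_i}^{p_i}$ (by the induction hypothesis) plus the cost $\gamma_M(p_i,y_i)$ of its final, machine-exiting step (which is interior to $E(q,x)$ and therefore counted), whereas the last block $E(p_k,y_k)$ contributes only its exit cost $c_x^{p_k}$, since its final step is the excluded last step of $E(q,x)$. Summing yields $\sum_{i=1}^{k-1}\bigl(c_{y_i}^{p_i}+\gamma_M(p_i,y_i)\bigr)+c_x^{p_k}$. By the definition of the augmented output $\hat\gamma$ (namely $\hat\gamma(p_i,y_i)=c_{y_i}^{p_i}+\gamma_M(p_i,y_i)$ for the interior transitions, and $\hat\gamma(p_k,x)=c_x^{p_k}$ for the exiting transition into $E_x$), this sum equals the total $\hat\gamma$-length of the Dijkstra path $z_x^{M}$ in $\hat{M}$, which is $c_x^M$. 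By Proposition~\ref{th:offline_optimal_cost}, $c_x^M$ equals the optimal $(M,x)$-exit cost, i.e.\ the optimal $(q,x)$-exit cost. Thus $E(q,x)$ is a $(q,x)$-exit trajectory attaining the optimal exit cost, which is exactly the claim.

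Finally, I would dispose of the degenerate case $c_x^M=\infty$ (no $(M,x)$-exit trajectory exists) separately: by the convention adopted in the definition of optimal exit trajectories, any trajectory contained in $M$ then counts as optimal, so the output of Algorithm~\ref{alg:optimal_expansion} is vacuously optimal and the induction still closes.
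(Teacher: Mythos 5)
Your proof is correct, and it follows the same skeleton as the paper's: induction over the depth of the subtree rooted at $q$, with the identical cost decomposition $\sum_{i=1}^{k-1}\bigl(c_{y_i}^{p_i}+\gamma_M(p_i,y_i)\bigr)+c_x^{p_k}$ for the expanded concatenation. Where you diverge is in how optimality is concluded. The paper closes with a fresh exchange argument: it takes an arbitrary competing $(q,x)$-exit trajectory $t$, decomposes it as $t_1\cdots t_l$ into exit trajectories of states of $M$, upgrades each $t_i$ to an optimal one, and derives a contradiction with the Dijkstra-optimality of $z_x^M$ in $\hat{M}$. You instead observe that your cost sum is exactly the $\hat\gamma$-length of $z_x^M$, i.e.\ $c_x^M$, and invoke Proposition \ref{th:offline_optimal_cost} to identify $c_x^M$ with the optimal $(M,x)$-exit cost. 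This is legitimate and non-circular (Proposition \ref{th:offline_optimal_cost} is proved independently of the expansion), and it buys a shorter argument by factoring the competitor-decomposition step through a result already established, rather than repeating it. You also supply a step the paper leaves entirely implicit: the verification that $E(p_1,y_1)\cdots E(p_k,y_k)$ is a genuine trajectory of $Z$, via the gluing argument that the upward search in $\psi$ after each block terminates exactly at the node $p_i$ of $M$ and then descends to $\mathrm{start}(M_{p_{i+1}})$. That is the right argument and a worthwhile addition; your handling of the degenerate case $c_x^M=\infty$ matches the paper's convention.
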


\begin{algorithm}[t]
\caption{Optimal\_expansion}\label{alg:optimal_expansion}
\begin{algorithmic}[1]
\Require HiMM$ \; Z =(X,T)$ and node-input pair $(q,x)$.
\Ensure An optimal expansion $z$
\State return $z \gets \mathrm{Trajectory\_expansion}(q,x)$
\State $\mathrm{Trajectory\_expansion}(q,x)$: \Comment{Recursive function} 
\If {$q \in S_Z$}
\State return $(q,x)$.
\Else
\State $(q_1,x_1),\dots,(q_m,x_m) \gets z^{M_q}_x$ \Comment{$z^{M_q}_x$ from offline step} 
\For {each $(q_i,x_i)$}
\State $z_i  \gets  \mathrm{Trajectory\_expansion}(q_i,x_i)$
\EndFor
\State return $(z_1, \dots, z_m)$
\EndIf 
\end{algorithmic}
\end{algorithm}

We now define the reduced HiMM $\bar{Z}$.

\begin{definition}[Reduced HiMM]\label{reduced_planning_himm}
Let $Z = (X,T)$, $s_\mathrm{init}$ and $s_\mathrm{goal}$ be as above. Consider the HiMM $\bar{Z} = (\bar{X},\bar{T})$ where $\bar{X} = \{\bar{U}_1,\dots,\bar{U}_n\}\cup\{\bar{D}_1,\dots,\bar{D}_n\}$ and $\bar{T}$ is equal to the subtree of $T$ consisting of the nodes $U_1,\dots,U_n$ and $D_1,\dots,D_m$ but replaced with $\bar{U}_1,\dots,\bar{U}_n$ and $\bar{D}_1,\dots,\bar{D}_n$ respectively. For brevity, let $\M = \{{U}_1,\dots,{U}_n\}\cup\{{D}_1,\dots,{D}_n\}$. The details of this construction are:
\begin{enumerate}[(i)]
\item Let $U_i = (Q,\Sigma,\Lambda, \delta,\gamma, s)$. Then $\bar{U}_i = (Q,\Sigma,\Lambda, \delta,\bar{\gamma}, s)$, where $\bar{\gamma}(q,x) =$
\begin{equation}\label{eq:bar_gamma}
\begin{cases}
\gamma(q,x), & \delta(q,x) \neq \emptyset, ({U}_i \xrightarrow{q} M) \in {T}, M \in \M \\
c_x^{q}+\gamma(q,x), & \delta(q,x) \neq \emptyset, \mathrm{otherwise} \\
0, & \delta(q,x) = \emptyset, ({U}_i \xrightarrow{q} M) \in {T}, M \in \M \\
c_x^{q}, & \delta(q,x) = \emptyset, \mathrm{otherwise.}
\end{cases}
\end{equation}
Here, $c^q_x$ is as in Section \ref{optimal_exit_table}. The MM $\bar{D}_i$ is defined analogously (replace $U_i$ and $\bar{U}_i$ with $D_i$ and $\bar{D}_i$, respectively).
\item The reduced hierarchical transition function $\bar{\psi}$ is constructed according to Definition \ref{HFSM_output_def} for the HiMM $\bar{Z}$. 
\item Let $q$ be a node in $\bar{Z} = (\bar{X},\bar{T})$ where $q \in Q({X}_j)$, ${X}_j \in \bar{X}$. Let $v = \delta(q,x)$. The reduced hierarchical output function $\bar{\chi}$ is defined as $\bar{\chi}(q,x) = \gamma_{{X}_j}(q,x)$ if $v \neq \emptyset$; $\bar{\chi}(q,x) = \gamma_{{X}_j}(q,x)+\bar{\chi}(w,x)$ if $v = \emptyset$ and $(W \xrightarrow{w} X_j) \in \bar{T}$ with $W \in \bar{X}$; and, $\bar{\chi}(q,x) = \emptyset$~otherwise.
\end{enumerate}
We call $\bar{Z}$ the \emph{reduced HiMM} (with respect to $s_\mathrm{init}$ and $s_\mathrm{goal}$).
\end{definition}
Analogous to HiMMs, the \emph{reduced cumulative cost} $\bar{C}(z)$ for a trajectory $z$ of $\bar{Z}$ is $\bar{C}(z) = \sum_{i=1}^N \bar{\chi}(q_i,x_i)$ (if $\bar{\chi}(q_N,x_N) \neq \emptyset$, and $C(z) = + \infty$ otherwise), a plan that minimises $\min_{u \in \Psi(s_\mathrm{init},s_\mathrm{goal})} \bar{C}(z)$ is an \emph{optimal plan} and the induced trajectory is an \emph{optimal trajectory} to the \emph{reduced planning~objective} $(\bar{Z},s_\mathrm{init},s_\mathrm{goal})$. We have the following key~result:

\begin{theorem}[Planning equivalence]\label{theorem:planning_equivalence}
Let $Z = (X,T)$ be an HiMM and $\M = \{U_1,\dots,U_n\}\cup\{D_1,\dots,D_m\}$. Then:
\begin{enumerate}[(i)]
\item Let $z$ be an optimal trajectory to $(Z,s_\mathrm{init}, s_\mathrm{goal})$. Then, the reduced trajectory $z |_\M$ is an optimal trajectory to~$(\bar{Z}, s_\mathrm{init}, s_\mathrm{goal})$.
\item Let $z$ be an optimal trajectory to $(\bar{Z}, s_\mathrm{init}, s_\mathrm{goal})$. Then, an optimal expansion of $z$ (expanded over $Z$) is an optimal trajectory to~$(Z, s_\mathrm{init}, s_\mathrm{goal})$.
\end{enumerate}
\end{theorem}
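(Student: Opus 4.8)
The plan is to prove both directions by setting up a cost-faithful correspondence between trajectories of $Z$ and of $\bar Z$, built from the two operations ``reduce'' ($z \mapsto z|_\M$) and ``expand'' ($\bar z \mapsto E(\bar z)$). The backbone consists of two lemmas. First, an \emph{expansion lemma}: for every trajectory $\bar z = (q_i,x_i)_{i=1}^N$ of $\bar Z$ from $s_{\mathrm{init}}$ to $s_{\mathrm{goal}}$, its optimal expansion $E(\bar z)$ is a trajectory of $Z$ from $s_{\mathrm{init}}$ to $s_{\mathrm{goal}}$ with $C(E(\bar z)) = \bar C(\bar z)$. Second, a \emph{reduction lemma}: for every trajectory $z$ of $Z$ from $s_{\mathrm{init}}$ to $s_{\mathrm{goal}}$, the reduced trajectory $z|_\M$ is a trajectory of $\bar Z$ from $s_{\mathrm{init}}$ to $s_{\mathrm{goal}}$ with $\bar C(z|_\M) \le C(z)$. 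Granting these, the theorem follows by a short optimization argument: the expansion lemma gives $\min_Z C \le \min_{\bar Z}\bar C$ and the reduction lemma gives $\min_{\bar Z}\bar C \le \min_Z C$, so the two optimal values coincide. For (i), an optimal $z$ satisfies $\bar C(z|_\M)\le C(z)=\min_Z C=\min_{\bar Z}\bar C$, and feasibility of $z|_\M$ forces equality, so $z|_\M$ is optimal; for (ii), an optimal $z$ in $\bar Z$ yields $C(E(z))=\bar C(z)=\min_{\bar Z}\bar C=\min_Z C$ with $E(z)$ feasible, hence optimal.

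The expansion lemma is where I would spend most care on cost accounting. The key observation is that $\bar\chi(q_i,x_i)$, unwound through its recursion up the tree $\bar T$, equals the exit cost $c^{q_i}_{x_i}$ (picked up by the $\delta(q_i,x_i)=\emptyset$, collapsed-child branch of \eqref{eq:bar_gamma}, i.e.\ the term $c_x^q$), plus a string of zero contributions from the intermediate $\M$-nodes (the $\delta=\emptyset$, $M\in\M$ branch), and finally the genuine firing cost $\gamma(\,\cdot\,,x_i)$ at the first ancestor MM admitting an $x_i$-transition. On the $Z$-side, $E(q_i,x_i)$ is, by Proposition~\ref{proposition:optimal_expansion}, an optimal $(q_i,x_i)$-exit trajectory, so the transitions strictly inside the collapsed subtree contribute exactly $c^{q_i}_{x_i}$ (Proposition~\ref{th:offline_optimal_cost}), while the single exiting transition contributes a $\chi$-value equal to the same firing cost. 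Because the ancestors of $q_i$ lie entirely in $\M$ (the state $q_i$ sits in some $\bar U_j$ or $\bar D_j$, whose ancestors are exactly the $U$- or $D$-path to the root), the firing MM, and hence the firing cost, are identical in $Z$ and $\bar Z$. Summing over $i$ yields $C(E(\bar z))=\bar C(\bar z)$.

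For the reduction lemma I would segment $z$ according to the reduced node $\bar q_i$: as long as $z$ wanders inside a single collapsed subtree (a child of an $\M$-MM), $\bar q_i$ is constant, and $z|_\M$ records one visible transition precisely when $\bar q$ changes. Each such segment of $z$ consists of an excursion inside the collapsed subtree followed by the exiting transition; by the definition of the optimal exit cost and Proposition~\ref{th:offline_optimal_cost}, the excursion costs at least $c^{\bar q_i}_{x_i}$, while the exiting transition costs exactly the $\gamma$-value recorded by $\bar\chi(\bar q_i,x_i)$. Hence the $Z$-cost of the segment is at least $\bar\chi(\bar q_i,x_i)$, and summing gives $\bar C(z|_\M)\le C(z)$; transitions taken directly between states lying in $\M$-MMs are the degenerate case $c^{\bar q_i}_{x_i}=0$ and contribute equality.

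The main obstacle I anticipate is not the cost (in)equalities but the feasibility bookkeeping — showing that $E(\bar z)$ and $z|_\M$ are genuinely trajectories with the correct endpoints. This amounts to checking that the hierarchical transition maps line up: after an expanded segment $E(q_i,x_i)$ exits its subtree, the map $\psi$ of $Z$ lands exactly on the start-state expansion of $q_{i+1}$, and dually each change of $\bar q$ in $z|_\M$ realizes a legal $\bar\psi$-transition ending, at the last step, at $s_{\mathrm{goal}}$. I would establish this by a node-correspondence argument: a state $q\in S_Z$ and its reduced node $\bar q$ share, by construction of $\bar T$, the same ancestor MMs within $\M$, so $\psi$ and $\bar\psi$ agree on the ``go up to the first $x$-admitting ancestor, then descend to its start state'' computation once the collapsed interior is abstracted away. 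Formalizing this equality of the up-then-down computations of $\psi$ and $\bar\psi$ — ideally by induction on $\mathrm{depth}(Z)$ following the recursive definitions of $\psi$, $E$, and $z|_\M$ — is the delicate part; once it is in place, the two lemmas and thus both parts of the theorem follow.
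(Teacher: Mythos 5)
Your proposal is correct and follows essentially the same route as the paper: two lemmas --- a cost-preserving expansion lemma (identical to the paper's Lemma~\ref{theorem:equavalence_lemma2}) and a reduction lemma --- combined with a sandwich argument on the two optimal values. The only difference is that your reduction lemma is stated for all feasible trajectories with the inequality $\bar C(z|_\M)\le C(z)$, whereas the paper's Lemma~\ref{theorem:equavalence_lemma1} applies only to optimal trajectories and exploits their optimality to obtain the equality $\bar C(z|_\M)=C(z)$ via a partition into optimal exit trajectories; both versions suffice for the final step.
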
 
Theorem \ref{theorem:planning_equivalence} (ii) says that, to look for an optimal trajectory to $({Z}, s_\mathrm{init}, s_\mathrm{goal})$, one can look for an optimal trajectory to $(\bar{Z}, s_\mathrm{init}, s_\mathrm{goal})$ instead and then just expand it. This result is crucial for our planning algorithm, since it drastically  reduces the search space. Also, (i) says that if there is no feasible trajectory to $({Z}, s_\mathrm{init}, s_\mathrm{goal})$, then there is no feasible trajectory to $(\bar{Z}, s_\mathrm{init}, s_\mathrm{goal})$ either. Thus, we can exclusively consider the reduced planning objective $(\bar{Z}, s_\mathrm{init}, s_\mathrm{goal})$. For the planning algorithm, we also need the following~result:
\begin{proposition}\label{theorem:from_B}
Consider $(\bar{Z},s_\mathrm{init},s_\mathrm{goal})$. Then:
\begin{enumerate}[(i)]
\item An optimal trajectory $z = (q_i,x_i)_{i=1}^N$ to $(\bar{Z},s_\mathrm{init},s_\mathrm{goal})$ has a state $q_i$ equal to $\mathrm{start}(\bar{B})$. 
\item Let $\mathrm{start}(\bar{B})$ be a state of $\bar{D}_{k_1}$ and $\mathrm{start}(\bar{D}_{k_1-1})$ be the start state of $\bar{D}_{k_2}$, and so on till $\mathrm{start}(\bar{D}_{k_j})$ is the start state of $\bar{D}_1$. Then, provided $(\bar{Z},\mathrm{start}(\bar{B}),s_{goal})$ is feasible\footnote{This is true if $(\bar{Z}, s_\mathrm{init}, s_\mathrm{goal})$ is feasible, by (i).}, an optimal trajectory to $(\bar{Z},\mathrm{start}(\bar{B}),s_{goal})$ is $w_1,w_2,\dots,w_j$, where $w_i$ is an optimal trajectory in $\bar{D}_{k_i}$ from the start state of $\bar{D}_{k_i}$ to the state corresponding to $\bar{D}_{k_i-1}$ (with $\bar{D}_0 = s_{goal}$).
\end{enumerate}
\end{proposition}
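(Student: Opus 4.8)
The plan is to isolate a single structural fact about the reduced hierarchical transition function $\bar\psi$ and then let non-negativity of the costs ($\Lambda\subset\RR^+$) together with additivity of $\bar C$ do the rest. For an MM $R$ appearing as a node of $\bar T$, write $\mathrm{sub}(R)$ for the states of $\bar Z$ lying in the subtree of $\bar T$ rooted at $R$. The heart of the argument is the following \emph{entry lemma}: if a trajectory of $\bar Z$ begins at a state outside $\mathrm{sub}(R)$ and later visits a state in $\mathrm{sub}(R)$, then the first such visited state is exactly $\mathrm{start}(\bar R)$. To prove it I would analyse one transition $q_{i+1}=\bar\psi(q_i,x_i)$ with $q_i\notin\mathrm{sub}(R)$ and $q_{i+1}\in\mathrm{sub}(R)$. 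By Definition \ref{HFSM_output_def}, $\bar\psi(q_i,x_i)$ is obtained by climbing from $q_i$ to the first node $w$ (equal to $q_i$ or a hierarchical ancestor of $q_i$) that supports $x_i$, setting $v=\delta(w,x_i)$, and returning either $v$ (if $v$ is a leaf state) or $\mathrm{start}(Y)$ (if $v$ refines to an MM $Y$). The case analysis then runs: if $v$ or $w$ lay strictly inside $\mathrm{sub}(R)$, the firing node $w\in\mathrm{sub}(R)$ would force $q_i\in\mathrm{sub}(R)$ (since $q_i$ is a descendant of $w$), a contradiction; the only surviving possibilities are that $v$ is the node refining to $R$, or $v$ refines to an ancestor of $R$ whose start-chain passes through $R$, and in both cases the landing state is $\mathrm{start}(\bar R)$. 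I expect this case analysis to be the main obstacle, precisely because one must rule out entering below $R$ without having started below $R$, which requires carefully tracking that the firing node is never a strict descendant of $R$.

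Given the entry lemma, part (i) is immediate. The $D$-path subtrees are nested, $\mathrm{sub}(D_1)\subseteq\cdots\subseteq\mathrm{sub}(B)$ with $s_\mathrm{goal}\in\mathrm{sub}(D_1)\subseteq\mathrm{sub}(B)$, while $s_\mathrm{init}$ lies in the subtree rooted at $U_\alpha$, which is a sibling of $B$ under the common MM $U_{\alpha+1}=D_{\beta+1}$ and hence disjoint from $\mathrm{sub}(B)$. Thus any feasible (in particular optimal) trajectory starts outside $\mathrm{sub}(B)$ and ends inside it, so by the entry lemma it visits $\mathrm{start}(\bar B)$; note this needs only feasibility, not optimality.

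For part (ii) I would induct on $j$ (the number of descent levels below $\mathrm{start}(\bar B)$). Since $\mathrm{start}(\bar B)=s(\bar D_{k_1})$ lies outside $\mathrm{sub}(D_{k_1-1})$ but $s_\mathrm{goal}\in\mathrm{sub}(D_{k_1-1})$, the entry lemma forces any feasible trajectory $z$ to visit $\mathrm{start}(\bar D_{k_1-1})=s(\bar D_{k_2})$; split $z$ at the first such visit into a prefix $p$ and suffix $r$, so $\bar C(z)=\bar C(p)+\bar C(r)$ by additivity. The suffix $r$ is a feasible trajectory from $\mathrm{start}(\bar D_{k_1-1})$ to $s_\mathrm{goal}$, so by induction $\bar C(r)\ge\bar C(w_2\cdots w_j)$. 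For the prefix, the transition entering $\mathrm{sub}(D_{k_1-1})$ must fire from a leaf state of $\bar D_{k_1}$ whose $\delta$-image is the node refining to $\bar D_{k_1-1}$; and every time $p$ leaves $\bar D_{k_1}$ upward it can re-enter $\mathrm{sub}(D_{k_1})$ only through $s(\bar D_{k_1})$ again (entry lemma), at cost $\ge 0$. Hence $p$ contains a trajectory within the single MM $\bar D_{k_1}$ from $s(\bar D_{k_1})$ to the node refining to $\bar D_{k_1-1}$ of cost at most $\bar C(p)$, giving $\bar C(p)\ge\bar C(w_1)$ by optimality of $w_1$. Combining, $\bar C(z)\ge\bar C(w_1\cdots w_j)$.

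It remains to check that $w_1\cdots w_j$ is itself a feasible trajectory of $\bar Z$ achieving this bound, which closes the argument. Here I would observe that for transitions internal to an MM $\bar D_{k_i}$ one has $\bar\chi=\bar\gamma_{\bar D_{k_i}}$ (the $v\neq\emptyset$ branches of Definitions \ref{HFSM_output_def} and \ref{reduced_planning_himm}), so the within-MM cost of each $w_i$ equals its $\bar Z$-cost; moreover the hierarchical semantics automatically carries the final node refining to $\bar D_{k_i-1}$ down the start-chain to $s(\bar D_{k_{i+1}})=\mathrm{start}(\bar D_{k_i-1})$, the start of $w_{i+1}$, folding that descent into the last transition of $w_i$ at no extra cost. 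The base case is the level whose down-node is $s_\mathrm{goal}$ itself (the convention $\bar D_0=s_\mathrm{goal}$), where the claim is just the within-MM optimality of $w_j$. Thus $w_1\cdots w_j$ is feasible with cost $\sum_i\bar C(w_i)$, matching the lower bound, and is therefore optimal.
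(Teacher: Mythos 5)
Your proposal is correct and follows essentially the same route as the paper's (very terse) proof: any feasible trajectory from $\mathrm{start}(\bar{B})$ to $s_\mathrm{goal}$ must pass through the chain of start states $\mathrm{start}(\bar{D}_{k_i})$, each corresponding segment costs at least $\bar{C}(w_i)$ by the within-machine optimality of $w_i$, and non-negativity of the costs absorbs any detours. Your entry lemma and the explicit feasibility check for $w_1\cdots w_j$ are details the paper leaves implicit (it dismisses (i) as ``clear'' and asserts the waypoint property in (ii) without justification), so your write-up is a careful filling-in of the same argument rather than a different one.
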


\begin{algorithm}[t]
\caption{Reduce\_and\_solve}\label{alg:Step1} 
\begin{algorithmic}[1]
\Require $(Z,s_\mathrm{init},s_\mathrm{goal})$ and $(c_x^M,z_x^M)_{x \in \Sigma, M \in X}$.
\Ensure An optimal trajectory $z$ to $(\bar{Z},s_\mathrm{init},s_\mathrm{goal})$.
\State \textbf{Part 1: Solve $(\bar{Z},s_\mathrm{init},\mathrm{start}(\bar{B}))$}
\State $\bar{Z} \gets \mathrm{Reduce}(Z,s_{\mathrm{init}},s_{\mathrm{goal}},(c_x^M,z_x^M)_{x \in \Sigma, M \in X})$
\State Set $G$ to an empty graph \Comment{To be constructed}
\State Add nodes to $G$ 
\For {$i =1,\dots,n$} \Comment{Consider $\bar{U}_{i}$}
\State Get states to search from and too, $I$ and $D$, in $\bar{U}_i$.
\For {$s \in I$}
\State $(c_d, z_d)_{d \in D} \gets \mathrm{Dijkstra}(s, D, \bar{U}_{i})$
\State Add arcs corresponding to $(c_d, z_d)_{d \in D}$ in $G$ 
\EndFor
\EndFor
\State $z_1 \gets \mathrm{Dijkstra}(s_\mathrm{init}, \bar{B}, G)$ \Comment{End of Part 1}
\State \textbf{Part 2: Solve $(\bar{Z},\mathrm{start}(\bar{B}),s_\mathrm{goal})$}
\State Let $z_2$ be an empty trajectory \Comment{To be constructed}
\State $(M,c) \gets (\bar{D}_{k_1},\mathrm{start}(\bar{B}))$ 
\State $g \gets$ corresponding state in $M$ to $\bar{D}_{k_1-1}$ 
\While {$c \neq s_\mathrm{goal}$}
\State $z \gets \mathrm{Dijkstra}(c,g,M)$ 
\State $z_2 \gets z_2 z$ \Comment{Add $z$ to $z_2$}
\State $c \gets start(g)$
\If {$c \neq s_\mathrm{goal}$}
\State Let $\bar{D}_{k_i}$ be the MM with state $c$
\State $M \gets \bar{D}_{k_i}$
\State $g \gets$ corresponding state in $M$ to $\bar{D}_{k_i-1}$ 
\EndIf
\EndWhile \Comment{End of Part 2}
\State return $z \gets z_1 z_2$
\end{algorithmic}
\end{algorithm}

\subsection{Online Step: Algorithm}\label{online_step_algorithm}
In this section, we provide the details of the online step, based on the theory in Section \ref{online_step}. More precisely, to compute an optimal plan to $(Z,s_\mathrm{init},s_\mathrm{goal})$, the algorithm first considers the reduced planning objective $(\bar{Z},s_\mathrm{init},s_\mathrm{goal})$, which by Proposition \ref{theorem:from_B} (i) can be divided into obtaining an optimal trajectory to $(\bar{Z},s_\mathrm{init},\mathrm{start}(\bar{B}))$ and then combining it with an optimal trajectory to $(\bar{Z},\mathrm{start}(\bar{B}),s_\mathrm{goal})$. An optimal trajectory to $(Z,s_\mathrm{init},s_\mathrm{goal})$ is then obtained by expanding this trajectory, as given by Theorem \ref{theorem:planning_equivalence} (ii). From this, we obtain an optimal plan to $(Z,s_\mathrm{init},s_\mathrm{goal})$. The details are given below.


\subsubsection{Solving \texorpdfstring{$(\bar{Z},s_\mathrm{init},s_\mathrm{goal})$}{(barZ,sinit,sgoal)}}

To solve $(\bar{Z},s_\mathrm{init},s_\mathrm{goal})$, with procedure given by Algorithm \ref{alg:Step1}, we first solve $(\bar{Z},s_\mathrm{init},\mathrm{start}(\bar{B}))$. To this end, we first reduce $Z$ to $\bar{Z}$. We then note that there exists an optimal trajectory from $s_\mathrm{init}$ to $\mathrm{start}(\bar{B})$ in $\bar{Z}$ that only goes through states in $\bar{U}_1,\dots,\bar{U}_n$. Therefore, to solve $(\bar{Z},s_\mathrm{init},\mathrm{start}(\bar{B}))$, we only need to consider $\bar{U}_1,\dots,\bar{U}_n$. Furthermore, only some states and trajectories in each $\bar{U}_i$ are relevant. Namely, for $\bar{U}_1$, the relevant states are the ones we might start from: $s_\mathrm{init}$ and $s(\bar{U}_1)$. From these states, the relevant trajectories are the ones that optimally exit $\bar{U}_1$, starting from the states.\footnote{More precisely, if one can exit $\bar{U}_1$ with $x \in \Sigma$, starting from $s_\mathrm{init}$, then we find a trajectory $t_x$ that does this optimally. The relevant trajectories from $s_\mathrm{init}$ are then all the found $t_x$. The case for $s(\bar{U}_1)$ is analogous.} 
For $\bar{U}_2$, the relevant states are $s(\bar{U}_2)$ and states in $\bar{U}_2$ that could be reached by exiting $\bar{U}_1$ (at most $|\Sigma|$ such states). From these states, the relevant trajectories are the ones that optimally exit $\bar{U}_2$ as well as the optimal trajectories to get to $\bar{U}_1$ (might be optimal to go back to $\bar{U}_1$). The other $\bar{U}_i$ are analogous to $\bar{U}_2$ except that: the relevant trajectories of $\bar{U}_\alpha$ also include the optimal trajectories to $\bar{B}$; and, for $\bar{U}_n$, we do not calculate the ones that optimally exit $\bar{U}_n$ (since this would only stop $\bar{Z}$). With this, we form a graph $G$ where nodes (arcs) corresponds to the relevant states (trajectories), labelling each arc with the relevant trajectory and its cumulative cost. Searching in $G$ using Dijkstra's algorithm, from $s_\mathrm{init}$ to $\bar{B}$, we find an optimal trajectory $z_1$ to $(\bar{Z},s_\mathrm{init},\mathrm{start}(\bar{B}))$. This procedure is Part 1 in Algorithm \ref{alg:Step1}.
We then get an optimal trajectory $z_2$ to $(\bar{Z},\mathrm{start}(\bar{B}),s_\mathrm{goal})$ using Proposition \ref{theorem:from_B} (ii), and Dijkstra's algorithm to search in each $\bar{D}_{k_i}$. This is Part 2 in Algorithm \ref{alg:Step1}. Finally, we combine $z_1$ and $z_2$ to get an optimal trajectory $z$ to $(\bar{Z},s_\mathrm{init},s_\mathrm{goal})$. We get time complexity:

\begin{proposition}\label{prop:time_complexity_step_1_and_2}
The time complexity of Algorithm \ref{alg:Step1} is
\begin{align*}
O \big (|\Sigma|^2 \mathrm{depth}(Z) + |\Sigma| \mathrm{depth}(Z) \cdot \log ( |\Sigma| \mathrm{depth}(Z) ) \big ) 
+ \\
O \big ( [b_s |\Sigma | + b_s \log(b_s)] \cdot \mathrm{depth}(Z) \big )
\end{align*}
where the first (second) $O$-term is from Part 1 (Part 2), and $b_s$ is the maximum number of states in an MM. In particular, with bounded $|\Sigma|$ and $b_s$, we get $O(\mathrm{depth}(Z) \cdot \log (\mathrm{depth}(Z)))$.
\end{proposition}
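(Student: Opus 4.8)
The plan is to bound the running time of each block of Algorithm \ref{alg:Step1} separately, using throughout three facts: both the $U$-path and the $D$-path have length at most $\mathrm{depth}(Z)+1$, so $n,m = O(\mathrm{depth}(Z))$; every MM in $Z$ has at most $b_s$ states and hence at most $b_s|\Sigma|$ transitions; and a single Dijkstra search with a Fibonacci heap \cite{DijkstraFibonacci} on a graph with $V$ vertices and $E$ arcs runs in $O(E + V\log V)$. First I would dispatch the reduction $\mathrm{Reduce}$ (line 2): building $\bar{Z}$ only rewrites the output function $\gamma \mapsto \bar{\gamma}$ on the two paths according to \eqref{eq:bar_gamma}, touching each transition of each $\bar{U}_i,\bar{D}_i$ once and looking up a precomputed $c_x^q$, which is $O(b_s|\Sigma|\,\mathrm{depth}(Z))$ and is subsumed by the stated bound.

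The heart of Part 1 is the graph $G$, and here I would make precise the claim from the text that only $O(|\Sigma|)$ states are relevant per level. For $\bar{U}_1$ the sources are $s_\mathrm{init}$ and $s(\bar{U}_1)$, and for each $\bar{U}_i$ with $i\ge 2$ the relevant states are $s(\bar{U}_i)$ together with the at most $|\Sigma|$ states of $\bar{U}_i$ reachable by an optimal exit of $\bar{U}_{i-1}$ (at most one per input); this uses the text's observation that an optimal trajectory to $\mathrm{start}(\bar{B})$ may be taken inside the $\bar{U}$-path. Hence $|V(G)| = O(|\Sigma|\,\mathrm{depth}(Z))$, and since each such state contributes at most $|\Sigma|$ outgoing arcs (one optimal exit per input, plus the downward arc), $|E(G)| = O(|\Sigma|^2\,\mathrm{depth}(Z))$. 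The single Dijkstra search on $G$ (line 12) therefore costs $O\big(|\Sigma|^2\mathrm{depth}(Z) + |\Sigma|\,\mathrm{depth}(Z)\log(|\Sigma|\,\mathrm{depth}(Z))\big)$, which is exactly the first $O$-term. The arc labels of $G$ are produced by the within-MM searches on line 8: per level there are $O(|\Sigma|)$ such calls, each a Dijkstra on a machine with $O(b_s)$ vertices and $O(b_s|\Sigma|)$ arcs, i.e. $O(b_s|\Sigma| + b_s\log b_s)$ apiece; this within-machine work carries an extra $|\Sigma|$ factor relative to the second term but is dominated once $|\Sigma|$ and $b_s$ are treated as constants.

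For Part 2 I would invoke Proposition \ref{theorem:from_B}(ii): the optimal trajectory from $\mathrm{start}(\bar{B})$ to $s_\mathrm{goal}$ is assembled from one optimal in-machine trajectory inside each MM $\bar{D}_{k_i}$ on the descending chain, and since these MMs lie on the $D$-path there are $O(\mathrm{depth}(Z))$ of them. Each is found by one Dijkstra call costing $O(b_s|\Sigma| + b_s\log b_s)$, giving the second $O$-term $O\big([b_s|\Sigma| + b_s\log b_s]\,\mathrm{depth}(Z)\big)$. Summing Part 1 and Part 2 yields the claimed bound, and substituting constant $|\Sigma|$ and $b_s$ collapses both terms to $O(\mathrm{depth}(Z)\log\mathrm{depth}(Z))$.

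The main obstacle I expect is the per-level state and arc count for $G$ — rigorously justifying the $O(|\Sigma|)$ relevant states and $O(|\Sigma|)$ optimal-exit arcs per level, and the reduction of the $s_\mathrm{init}$-to-$\mathrm{start}(\bar{B})$ search to the $\bar{U}$-path (this is where Proposition \ref{theorem:from_B}(i) does the work); everything else is routine Dijkstra accounting. A secondary subtlety is the honest bookkeeping of the line-8 within-machine searches, whose $|\Sigma|$-overhead must be acknowledged and absorbed into the bounded-parameter simplification rather than silently folded into the clean first term.
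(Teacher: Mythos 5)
Your argument is correct and follows essentially the same route as the paper's appendix proof: bound the $U$- and $D$-paths by $O(\mathrm{depth}(Z))$, count $|V(G)| = O(|\Sigma|\,\mathrm{depth}(Z))$ and $|E(G)| = O(|\Sigma|^2\,\mathrm{depth}(Z))$ to get the first term from the Dijkstra search on $G$, and charge one $O(b_s|\Sigma| + b_s\log b_s)$ Dijkstra call per machine on the descending chain for Part 2. You are in fact slightly more careful than the paper on one point: the paper's own intermediate bound for the per-level within-machine searches (its quantity $H$) also carries the extra $|\Sigma| b_s$-dependent factor you flag, and it is likewise absorbed only in the bounded-$|\Sigma|$, bounded-$b_s$ regime.
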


\subsubsection{Solving \texorpdfstring{$(Z,s_\mathrm{init},s_\mathrm{goal})$}{(Z,sinit,sgoal)}}
We get an optimal plan $u$ to $(Z,s_\mathrm{init},s_\mathrm{goal})$ by conducting an optimal expansion of the optimal trajectory $z$ to $(\bar{Z},s_\mathrm{init},s_\mathrm{goal})$ from Algorithm \ref{alg:Step1}. The optimal plan $u$ can be executed sequentially or obtained at once, with procedure given by Algorithm \ref{alg:step_3}. Here, $\mathrm{Plan\_expansion}$ is identical to $\mathrm{Trajectory\_expansion}$ in Algorithm \ref{alg:optimal_expansion} except line 4 that is changed to [return $x$] or [apply $x$] if one wants the full plan $u$ at once or sequential execution, respectively. We get time complexity:

\begin{proposition}\label{prop:time_complexity_step_3}
Executing an optimal plan $u$ sequentially using Algorithm \ref{alg:step_3} has time complexity $O(\mathrm{depth}(Z))$ to obtain the next input in $u$. Obtaining the full optimal plan at once has time complexity $O(\mathrm{depth}(Z) |u|)$, where $|u|$ is the length of $u$.
\end{proposition}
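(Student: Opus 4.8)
The plan is to analyse Algorithm \ref{alg:step_3} as a depth-first traversal of an implicit \emph{expansion tree} and to charge its running time against the nodes of that tree. Recall that $\mathrm{Plan\_expansion}$ shares the recursive structure of $\mathrm{Trajectory\_expansion}$ in Algorithm \ref{alg:optimal_expansion}: a call on a node-input pair $(q,x)$ with $q \notin S_Z$ reads the stored offline trajectory $z^{M_q}_x = (q_1,x_1),\dots,(q_m,x_m)$ and recurses on each $(q_i,x_i)$, whereas a call with $q \in S_Z$ is a base case contributing exactly one input $x$ (returned or applied at the modified line~4). The expansion tree is the tree of these recursive calls; since the full expansion of the reduced trajectory $z=(q_i,x_i)_{i=1}^N$ is the forest of the trees rooted at each $(q_i,x_i)$, its leaves are in bijection with the symbols of the produced plan $u$, so the total number of leaves is exactly $|u|$.

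First I would bound the height of every expansion tree by $\mathrm{depth}(Z)$. The key structural fact is that each $(q_i,x_i)$ appearing in $z^{M_q}_x$ has $q_i \in Q(M_q)$, i.e. $q_i$ is a state of the machine $M_q$ sitting exactly one layer below $q$ in $T$. Hence every recursive call strictly descends one level in $T$, and the recursion terminates upon reaching a state of $S_Z$ (a leaf of $T$); consequently the number of nested calls, and thus the height of each expansion tree, is at most $\mathrm{depth}(Z)$.

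For the sequential claim, I would observe that obtaining the next input amounts to advancing from the current leaf to the next leaf in the depth-first order of the expansion forest. Using the (explicit or recursion) stack, this advance pops ancestors of the current leaf until a call with an unexplored child is found, then descends to the leftmost leaf of that child's subtree (crossing to the next root $(q_{i+1},x_{i+1})$ when a whole tree is exhausted). The number of pop steps is at most the depth of the current leaf and the number of descent steps at most the depth of the next leaf, each bounded by $\mathrm{depth}(Z)$ by the previous paragraph; since the work per visited call is $O(1)$ (reading one entry of a stored $z^{M_q}_x$), producing the next input costs $O(\mathrm{depth}(Z))$.

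For the full-plan claim, I would bound the total number of calls across the forest. Charging each internal call to the leftmost leaf of its subtree, every leaf is charged by at most its own depth, hence by at most $\mathrm{depth}(Z)$ calls; summing over the $|u|$ leaves (and adding the leaves themselves) gives $O(\mathrm{depth}(Z)\,|u|)$ total calls. As the traversal visits each call a constant number of times with $O(1)$ work each, the total time to emit all of $u$ is $O(\mathrm{depth}(Z)\,|u|)$. The main obstacle I anticipate is making these counting arguments rigorous rather than the individual steps: specifically, justifying that iterating over $z^{M_q}_x$ is paid for by the branching into its $m$ children rather than by extra cost at the parent, so that the $O(1)$-per-node accounting is uniform, and confirming that the leaf-charging bound is tight so that no node is counted more than $\mathrm{depth}(Z)$ times.
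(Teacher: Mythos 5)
Your proof is correct and follows essentially the same route as the paper's (much terser) argument: both rest on the facts that $\mathrm{Plan\_expansion}$ recurses strictly downward in $T$, so its recursion depth is at most $\mathrm{depth}(Z)$, and that it operates in a depth-first manner, giving $O(\mathrm{depth}(Z))$ per emitted input and hence $O(\mathrm{depth}(Z)\,|u|)$ for the full plan. Your expansion-forest and leaf-charging bookkeeping simply makes rigorous what the paper asserts in two sentences.
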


\begin{algorithm}[t]
\caption{Expand}\label{alg:step_3}
\begin{algorithmic}[1]
\Require Optimal trajectory $z$ to $(\bar{Z},s_\mathrm{init},s_\mathrm{goal})$ and $(c_x^M,z_x^M)_{x \in \Sigma, M \in X}$.
\Ensure Executes/saves optimal plan $u$ to $(Z,s_\mathrm{init},s_\mathrm{goal})$.
\State Let $u$ be an empty trajectory \Comment{To be constructed} 
\For {$(q,x)$ in $z$}
\State $t \gets \mathrm{Plan\_expansion}(q,x)$ 
\State $u \gets u t$ \Comment{Concatenate $u$ and $t$}
\EndFor
\If {save $u$}
return $u$
\EndIf
\end{algorithmic}
\end{algorithm}

\section{Numerical evaluations}\label{numerical_evaluations}
In this section, we consider numerical case studies to validate the hierarchical planning algorithm given by Algorithm~\ref{alg:hierarchical_planning}. Case study 1 demonstrates the scalability of the algorithm. Case study 2 show-case the algorithm on the robot application introduced in the motivation.


\subsection{Case Study 1: Recursive System}\label{recursive_example}

\subsubsection{Setup}
To validate the efficiency of Algorithm~\ref{alg:hierarchical_planning}, we consider an HiMM $Z$ constructed recursively by nesting the same MM repeatedly to a certain depth. More precisely, the MM $M$ we consider has tree states $Q = \{1,2,3\}$ with start state 2 and three inputs $\Sigma = \{x,y,z\}$ with transitions depicted in Fig. \ref{fig:recursive_example} (left) and units costs (not depicted). The recursion step is then done by replacing state 1 and 3 with $M$, with result given by Fig. \ref{fig:recursive_example} (right). This procedure is then repeated for the recently added MMs until we have reached a certain depth (e.g., in Fig. \ref{fig:recursive_example}, we would replace state 2, 4, 5 and 7 with $M$). This yields the HiMM $Z$.

We check the computing time of Algorithm~\ref{alg:hierarchical_planning} when varying the depth of $Z$. To this end, we let $s_\mathrm{init}$ and $s_\mathrm{goal}$ be opposite states in the HiMM (e.g., state 2 and 7 in Fig. \ref{fig:recursive_example}) and compute the full optimal plan. For comparison, we also consider Dijkstra's algorithm for finding the optimal plan, applied to the equivalent flat MM.\footnote{This flat MM can be obtained by simply checking all transitions and costs at each state in $Z$, and form the corresponding MM from this data.} 


\begin{figure}[t]
\centering
\includegraphics[width=0.7\linewidth]{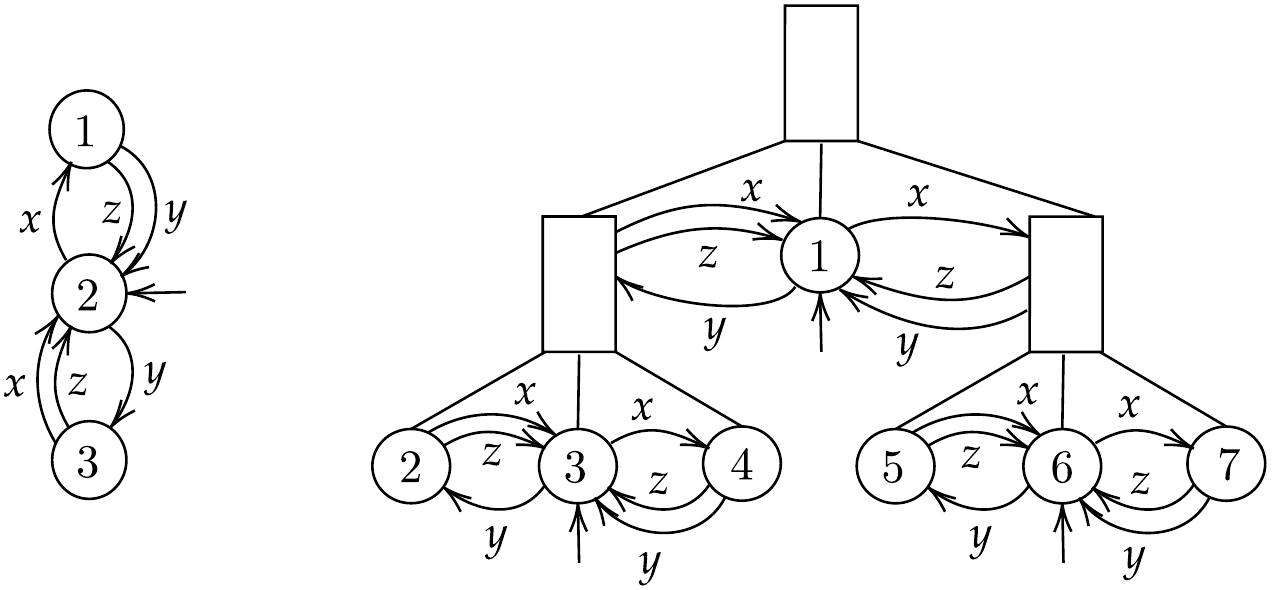}
\caption{MM $M$ (left) and HiMM $Z$ (right) with depth 2 for Case study 1.}
\label{fig:recursive_example}
\end{figure}  

\subsubsection{Result}

The computing time is shown in Fig. \ref{fig:recursive_example_result} with $\mathrm{depth}(Z)$ from 1 to 20. 
The online step finds optimal plans for all depths within milliseconds, while Dijkstra's algorithm is a bit faster for small depths ($\mathrm{depth}(Z) \leq 6$) but takes several seconds for larger depths ($\mathrm{depth}(Z) \geq 15$). In particular, for $\mathrm{depth}(Z)=20$, with about 2 million states, the online step finds an optimal plan in just 3.8 ms, compared to 108 s using Dijkstra's algorithm. Also, the offline step has a computing time comparable to Dijkstra's algorithm being slower for small depths (max 4 times slower), but negligible difference for large depths, and even slightly faster for $\mathrm{depth}(Z) \geq 18$ (due to a slower increase). Hence, for large depth, even the computing time for the offline plus the online step is slightly faster than Dijkstra's~algorithm.

\subsection{Case Study 2: Robot Warehouse Application}\label{robot_example}

\subsubsection{Setup}
We now consider the robot application introduced in the motivation, schematically depicted by Fig. \ref{fig:motivating_example}. To formalise the example as an HiMM $Z$, we consider 10 warehouses ordered linearly, where the robot can move to any neighbouring house (e.g., to house 2 and 4 from house 3, and only to house 2 from house 1), at a cost of 100. This yields the MM $M_1$ corresponding to the top layer of $Z$, with house 1 as start state and inputs $\Sigma = \{\text{left}, \text{right}\}$. 

Furthermore, each house is modelled as an MM $M_2$ having a single room. The room is a square $10 \times 10$ grid, where the robot at grid point $(i,j)$ ($1 \leq i,j \leq 10$) can move to any neighbouring (non-diagonal) grid point at a cost of 1 using inputs $\Sigma = \{\text{left}, \text{right}, \text{up}, \text{down}\}$. We also have a grid-point just outside the house, the entrance state, adjacent to $(1,1)$. The robot can move between the entrance state and $(1,1)$ with cost 1. From the entrance state, the robot can also exit $M_2$ by applying input $a \in \{\text{left}, \text{right}\}$. This $a$ is then fed to $M_1$, which moves the robot to the corresponding neighbouring house. The MM $M_2$ has $101$ states and four actions with the entrance state as start state. 

\begin{figure}[t]
\centering
\includegraphics[width=0.9\linewidth]{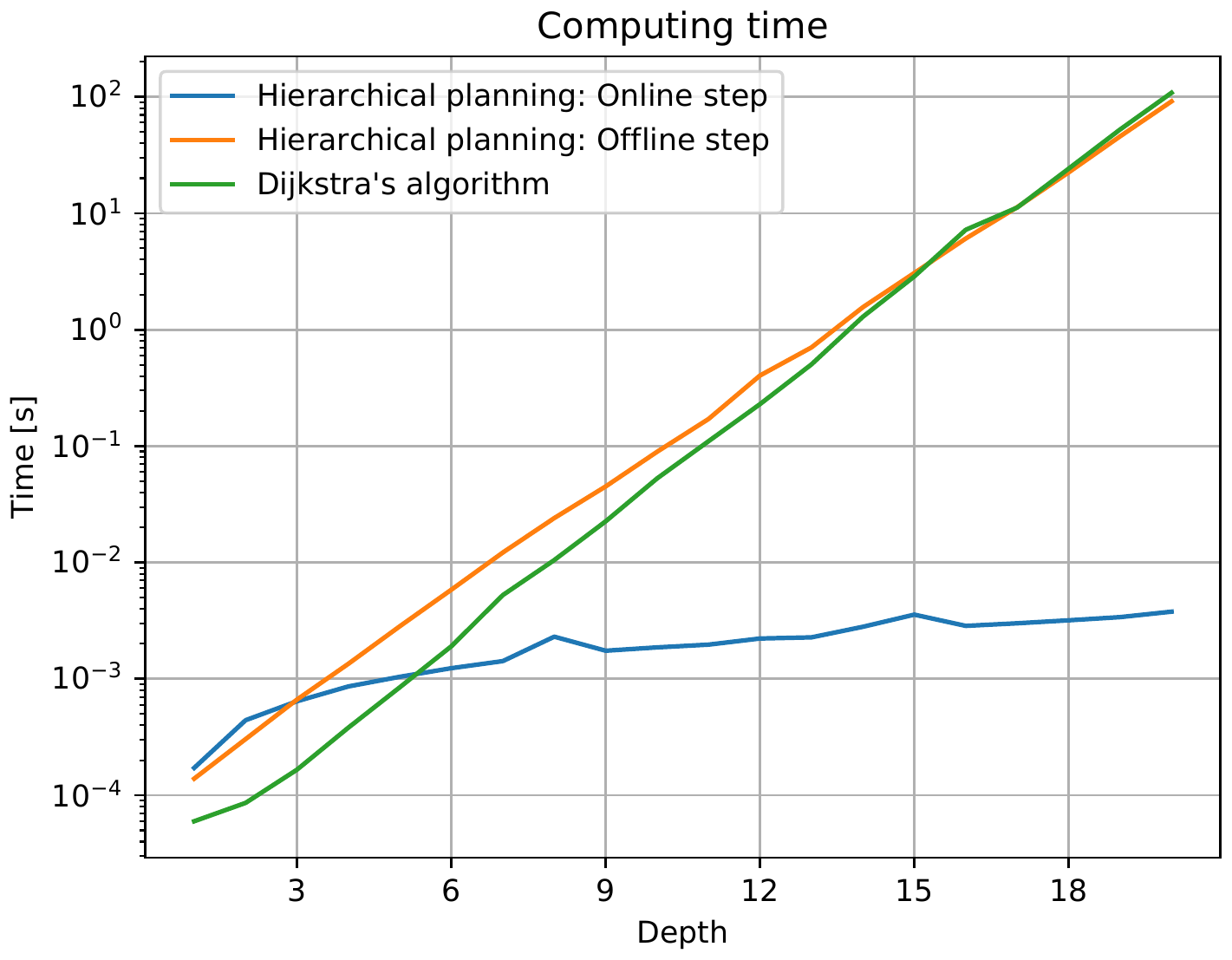}
\caption{Computing time for varying depth in Case Study 1.}
\label{fig:recursive_example_result}
\end{figure}  

Finally, at each grid-point inside the house, the robot has a work desk, modelled as an MM $M_3$, consisting of 9 laboratory test tubes arranged in a $3 \times 3$ test tube rack, where the robot can move between the test tubes or scan a tube using a robot arm. More precisely, at test tube $(i,j)$ ($1 \leq i,j \leq 3$), it can move the robot arm to any neighbouring tube (analogous to $M_2$) or scan the test tube, using inputs $\Sigma = \{\text{left}, \text{right}, \text{up}, \text{down}\} \cup \{ \text{scan} \}$. When a test tube has been scanned, it remembers it and do not scan other tubes. Similar to $M_2$, we have an entrance state from which the robot can either enter the work desk (starting at test tube $(1,1)$ and nothing scanned), or exit $M_3$ by applying input $a \in \{\text{left}, \text{right}, \text{up}, \text{down}\}$. This input $a$ is then fed to $M_2$ which transition to the corresponding grid-point. From $(1,1)$, we can also go back to the entrance state. All transition costs are set to 0.5 except the scanning, which costs 10. The MM $M_3$ has $9 \cdot 10 +1 = 91$ states and inputs $\Sigma$, with entrance state as start state. The hierarchy of $M_1$, $M_2$ and $M_3$ yields~$Z$.



\subsubsection{Result}
We set $s_\mathrm{init}$ to be the state where the robot is in house 1 at grid-point $(10,10)$ having scanned test tube $(3,3)$, and $s_\mathrm{goal}$ is identical to $s_\mathrm{init}$ except in house 10. That is, the robot has to move to house 10 and scan test tube $(3,3)$ at grid-point $(10,10)$. The online step finds an optimal plan in just 0.022 s compared to 3.4 s using Dijkstra's algorithm. Also, the offline step takes only 2.0 s, hence, even the offline plus online step is faster than Dijkstra's algorithm.

\section{Conclusion}\label{conclusion}
In this paper, we have considered a planning problem for an HiMM and developed an algorithm for efficiently computing optimal plans between any two states. The algorithm consists of an offline step and an online step. The offline step computes exit costs for each MM in a given HiMM $Z$. This step is done only once for $Z$, with time complexity scaling linearly with the number of MMs in $Z$. The online step then computes an optimal plan, from a given initial state to a goal state, by constructing an equivalent reduced HiMM $\bar{Z}$ (based on the exit costs), computing an optimal trajectory $z$ for $\bar{Z}$, and finally expanding $z$ to obtain an optimal plan $u$ to $Z$. The online step finds an optimal trajectory $z$ to $\bar{Z}$ in time $O(\mathrm{depth}(Z) \log ( \mathrm{depth}(Z) ) )$ and obtains the next optimal input in $u$ from $z$ in time $O(\mathrm{depth}(Z) )$, or the full optimal plan $u$ in time $O(\mathrm{depth}(Z) |u|)$. We validated our algorithm on large HiMMs having up to 2 million states, including a mobile robot application, and compared our algorithm with Dijkstra's algorithm. Our algorithm outperforms the latter, where the partition into an offline and online step reduces the overall computing time for large systems, and the online step computes optimal plans in just milliseconds compared to tens of seconds using Dijkstra's~algorithm. 




Future work includes extending the algorithm to efficiently handle changes in the given HiMM (e.g., modifying an MM in the HiMM), and comparing it with other hierarchical methods. Another challenge is to extend the setup to stochastic~systems.


\bibliographystyle{plain}
\bibliography{Ref3}

\if\longversion1

\section*{Appendix}

\subsection{Proof of Proposition \ref{th:offline_optimal_cost}}

\begin{proof}[Proof of Proposition \ref{th:offline_optimal_cost}]
We prove this by induction over the tree $T$, where $c_x^M$ are the values computed by Algorithm~\ref{alg:optimal_exit_table}. 


For the base case, consider an MM $M$ of $Z$ such that $M$ has no MMs as children in $T$. In this case, any state $q$ of $M$ is a state of $Z$, hence all optimal $(q,x)$-exit costs are zero, agreeing with Algorithm~\ref{alg:optimal_exit_table} setting $c_x^q = 0$.

For the induction step, consider any MM $M$ of $Z$. For each $q \in Q(M)$, assume that $c_x^q$ (computed by Algorithm~\ref{alg:optimal_exit_table}) is the optimal $(q,x)$-exit cost.\footnote{In particular, note that this assumption is true for the base case.} We will show that $c_x^M$ (computed by Algorithm~\ref{alg:optimal_exit_table}) is then the optimal $(M,x)$-exit cost. The proof then follows by induction over $T$ (starting with MMs as in the base case, and go higher up in $T$ using the induction step).




To show that $c_x^M$ is the optimal $(M,x)$-exit cost, note first that if no $(M,x)$-exit trajectory exists, then $c_x^M = \infty$. To see this, assume by contradiction that $c_x^M < \infty$ and let $z_x^M = \{(s_1,x_1)\}_{i=1}^N$ be the corresponding trajectory in $M$. Each $c_{x_i}^{s_i}$ must be finite and, by assumption, equal to the optimal $(s_i,x_i)$-cost. Let $t_i$ be the corresponding optimal $(s_i,x_i)$-exit trajectory. Then, $t_1 t_2 \dots t_N$ is a $(M,x)$-exit trajectory, a contradiction. Hence, $c_x^M = \infty$. 

We are left with the case when a $(M,x)$-exit trajectory exists. Let $t$ be an optimal $(M,x)$-exit trajectory (by finiteness, such a trajectory must exit). Note that $t$ can be partitioned into $t = t_1 t_2 \dots t_N$, where each $t_i$ is a $(s_i,x_i)$-exit trajectory for some $s_i \in Q(M)$. By optimality, each $t_i$ results in a cumulated cost $c_{x_i}^{s_i}$ plus transition cost $\gamma(s_i,x_i)$, except $t_N$ that only results in exit cost $c_{x_N}^{s_N}$ (since no transition from $s_N$ for $x_N=x$ exists by assumption). Hence, the trajectory $z = \{(s_1,x_1)\}_{i=1}^N$ in $\hat{M}$ has the same cumulated cost as $t$. Assume by contradiction that there exists a trajectory $\tilde{z} = \{(\tilde{s}_i,\tilde{x}_i)\}_{i=1}^N$ of $\hat{M}$ that reaches $E_x$ with a lower cost than $z$. Consider $\tilde{t} = \tilde{t}_1 \dots \tilde{t}_{\tilde{N}}$ where $\tilde{t}_i$ is the corresponding optimal $(\tilde{s}_i,\tilde{x}_i)$-exit trajectory. Then, $\tilde{t}$ has the same cost as $\tilde{z}$, which is lower than the cumulated cost of $z$, thus also of $t$. This is a contradiction. We conclude that $t$ must have the same cumulated cost as an optimal trajectory in $\hat{M}$ reaching $E_x$, and therefore, $c_x^M$ equals the optimal $(M,x)$-exit cost. By induction, this concludes the~proof.
\end{proof}

\subsection{Proof of Proposition \ref{th:offline_time_complexity}}
\begin{proof}[Proof of Proposition \ref{th:offline_time_complexity}]
We construct $\hat{M}$ from $M$ by adding the states $\{E_x\}_{x \in \Sigma}$ in time $O(|\Sigma |)$, constructing $\hat{\delta}$ by going through all the values of the function in time $O(b_s | \Sigma |)$, and constructing  $\hat{\gamma}$ analogously. The total time complexity for line 11 in Algorithm \ref{alg:optimal_exit_table} is therefore $O(b_s | \Sigma |)$. Furthermore, note that the maximum number of states of $\hat{M}$ is $b_s+|\Sigma|$ and the maximum number of transition arcs of $\hat{M}$ is $b_s |\Sigma|$ (considering $\hat{M}$ as a graph). Searching with Dijkstra's algorithm \cite{DijkstraFibonacci}, line 12 in Algorithm \ref{alg:optimal_exit_table}, has therefore complexity $O(E+V \log V) = O(b_s |\Sigma|+(b_s+|\Sigma|) \log(b_s+|\Sigma|))$ (where $E$ is the number of arcs and $V$ is the number of (graph) nodes in the graph used in Dijkstra's algorithm). Finally, line 3-10 takes time $O(b_s |\Sigma|)$ excluding the time it takes to compute $\mathrm{Optimal\_exit}(M_q)$ since that time is already accounted for (when considering the MM $M_q$). We conclude that the total time spent on one MM $M$ is $O(b_s |\Sigma|+(b_s+|\Sigma|) \log(b_s+|\Sigma|))$. This is done for all MMs in $Z$, so assuming there are $N$ MMs in $Z$, the time complexity for Algorithm \ref{alg:optimal_exit_table} is
\begin{align}\label{eq:time_complexity_1}
O(N [b_s |\Sigma|+(b_s+|\Sigma|) \log(b_s+|\Sigma|)]).
\end{align}
This completes the proof.
\end{proof}

\subsection{Proof of Proposition \ref{proposition:optimal_expansion}}

\begin{proof}[Proof of Proposition \ref{proposition:optimal_expansion}]
We may assume that there exists a $(q,x)$-exit trajectory (the result is otherwise trivial). We prove that $E(q,x)$ is an optimal $(q,x)$-exit trajectory by induction over the depth of the subtree of $T$ with root $q$. The base case when $q$ is a state of $Z$ is clear. Proceed by induction. Assume $q$ is not a state with corresponding MM $M$. Then $E(q,x)$ is on the form 
\begin{equation*}
E(q,x) = z_1 z_2 \dots z_m
\end{equation*}
as given by Algorithm \ref{alg:optimal_expansion}. By assumption, each $z_i$ is an optimal $(q_i,x_i)$-exit trajectory and hence the total exit cost is $\sum_{i=1}^{m-1} [c^{q_i}_{x_i}+\gamma_M(q_i,x_i)]+c^{q_m}_{x_m}$. Assume by contradiction that $t$ is an optimal $(q,x)$-exit trajectory with lower $(q,x)$-exit cost than $E(q,x)$. Note that, $t$ must be on the form 
$t = t_1 t_2 \dots t_l$,
where each $t_i$ is a $(s_i,w_i)$-exit trajectory, for some state $s_i$ of $M$ and input $w_i$. By optimality, each $t_i$ must be an optimal $(s_i,w_i)$-exit trajectory, and hence the $(q,x)$-exit cost of $t$ is $\sum_{i=1}^{l-1} [c^{s_i}_{x_i}+\gamma_M(s_i,x_i)]+c^{s_l}_{x_l}$. However, $\{(q_i,x_i)\}_{i=1}^m$ is an optimal trajectory to $E_x$ in $\hat{M}$, so its cost must be lower or equal to the cost of $\{(s_i,w_i)\}_{i=1}^l$. That is,
\begin{equation*}
\sum_{i=1}^{m-1} [c^{q_i}_{x_i}+\gamma_M(q_i,x_i)]+c^{q_m}_{x_m} \leq \sum_{i=1}^{l-1} [c^{s_i}_{x_i}+\gamma_M(s_i,x_i)]+c^{s_l}_{x_l},
\end{equation*}
which contradicts the assumption. We conclude that $E(q,x)$ is an optimal $(q,x)$-exit trajectory. 
\end{proof}

\subsection{Proof of Theorem \ref{theorem:planning_equivalence}}

To prove Theorem \ref{theorem:planning_equivalence}, we need the following two lemmas.

\begin{lemma}\label{theorem:equavalence_lemma1}
Let $z$ be an optimal trajectory to $(Z,s_\mathrm{init}, s_\mathrm{goal})$. Then $z$ can be partition into subsequences
\begin{equation}\label{eq:z_partitioned}
z = t_1 \dots t_L,
\end{equation}
where each $t_i$ is an optimal $(q_i,x_i)$-exit trajectory (in $Z$) for some node $q_i$ of $Z$ such that $q_i$ is a state of $\bar{Z}$.
Furthermore, the reduced trajectory of $z$ with respect to $\M$ equals $\bar{z} = \{(q_i,x_i)\}_{i=1}^L$ and is a trajectory to $\bar{Z}$ such that $q_1 = s_\mathrm{init}$, $q_{L+1} = \bar{\chi}(q_L,x_L) = s_\mathrm{goal}$ and $\bar{C}(\bar{z}) = C(z)$.
\end{lemma}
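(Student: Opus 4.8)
The plan is to prove both assertions at once by cutting the optimal trajectory $z = (q_i,x_i)_{i=1}^N$ at exactly the places where its reduced node changes. Writing $\bar q_i$ for the reduced node of $q_i$ with respect to $\M$, and setting $q_{N+1} := s_\mathrm{goal}$ (so $\bar q_{N+1} = s_\mathrm{goal}$, since $s_\mathrm{goal}\in D_1\in\M$), let $i_1<\dots<i_L$ be precisely the indices with $\bar q_{i+1}\neq\bar q_i$; by definition of $z|_\M$ these are the indices surviving in the reduced trajectory. With $i_0 := 0$ and $t_k := (q_i,x_i)_{i=i_{k-1}+1}^{i_k}$ this yields the partition $z=t_1\cdots t_L$. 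First I would note that $\bar q$ is constant on each block, equal to a node $Q_k$ that is a state of $\bar Z$, and that the states sharing reduced node $Q_k$ are exactly the states contained in the machine $M_{Q_k}$ corresponding to $Q_k$ (when $Q_k$ refines to a machine pruned in $\bar Z$), or the singleton $\{Q_k\}$ (when $Q_k$ is a genuine state of $Z$). Hence every $q_i$ in $t_k$ is contained in $M_{Q_k}$ and the final transition $(q_{i_k},x_{i_k})$ leaves $M_{Q_k}$, so $t_k$ is a $(Q_k,x_{i_k})$-exit trajectory (trivially so in the singleton case). Also $Q_1=\bar q_1 = s_\mathrm{init}$, since $s_\mathrm{init}\in U_1\in\M$.

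Next I would upgrade each $t_k$ to an \emph{optimal} exit trajectory. The key modularity fact I would establish first is that when a trajectory exits $M_{Q_k}$ with input $x$, the resulting state $\psi(\cdot,x)$ depends only on $x$ and the part of $T$ at or above $Q_k$, not on the internal route inside $M_{Q_k}$: the up-stepping defining $\psi$ passes through $Q_k$ from every internal state, so the first supporting node—and thus the post-exit state—is fixed by $x$ and the ancestors of $Q_k$ alone. Since each block is entered from outside at $\mathrm{start}(M_{Q_k})$ (because $\psi$ follows start states down on entry), I can replace a non-optimal $t_k$ by an optimal $(Q_k,x_{i_k})$-exit trajectory with the same start and the same exit input, hence the same successor $q_{i_k+1}$; splicing it back gives a feasible trajectory of strictly smaller cost, contradicting optimality of $z$. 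This proves the partition claim with each $t_k$ optimal.

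For the reduced-trajectory claim I would show $\bar z = (Q_k,x_{i_k})_{k=1}^L$ is a trajectory of $\bar Z$ reaching $s_\mathrm{goal}$, using two structural facts about $\bar Z$: $\M$ is upward closed in $T$ (the $U_i$- and $D_i$-paths are connected subtrees through the root), and each $\bar U_i,\bar D_i$ keeps the same $\delta$ and the same start state as $U_i,D_i$. Consequently the up-stepping that fires $x_{i_k}$ in $Z$ occurs at a node surviving in $\bar T$, so $\bar\psi$ fires at that same node, and the subsequent following of start states down $\bar T$ lands exactly on the reduced node of $q_{i_k+1}$; that is, $\bar\psi(Q_k,x_{i_k}) = Q_{k+1}$, and $\bar\psi(Q_L,x_{i_L}) = s_\mathrm{goal}$. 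For the identity $\bar C(\bar z)=C(z)$ I would split $C(z)=\sum_k C(t_k)$ and, using optimality of $t_k$, write $C(t_k)=c^{Q_k}_{x_{i_k}}+\chi(q_{i_k},x_{i_k})$ (exit cost plus the single firing-transition cost). A short case analysis on whether $x_{i_k}$ fires at $Q_k$'s level or strictly above, tracking the four cases of $\bar\gamma$ in \eqref{eq:bar_gamma}—the silent intermediate nodes contribute $0$, the leaf contributes $c^{Q_k}_{x_{i_k}}$, and the firing node contributes its $\gamma$—shows $\bar\chi(Q_k,x_{i_k})$ equals exactly this sum, giving the cost identity termwise.

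The main obstacle I anticipate is the bookkeeping in this last step: matching $\bar\psi$ and $\bar\chi$ on the pruned machine $\bar Z$ against $\psi$ and $\chi$ on $Z$, which rests on the modularity fact (exit behaviour is agnostic to the interior route) together with upward-closedness of $\M$ and preservation of start states. Once these are in place, the transition-matching and the termwise cost identity are routine.
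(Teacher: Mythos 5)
Your proof is correct and takes essentially the same route as the paper's: partition $z$ into blocks, one per visited state of $\bar{Z}$, argue each block is an \emph{optimal} exit trajectory via an exchange/splicing argument resting on optimality of $z$ (and on the fact that the post-exit state depends only on the exit input, not the interior route), and then match transitions and costs against $\bar{\psi}$ and $\bar{\chi}$ through the definition of $\bar{\gamma}$. The only difference is presentational: the paper builds the same partition by front-to-back induction and compresses the modularity fact, the splicing step, and the cost bookkeeping into ``by optimality'' and ``by the construction of $\bar{Z}$'', all of which you make explicit.
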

\begin{proof}
The partition is proved by induction as follows. Note that $z$ starts in $s_\mathrm{init}$, so $t_1 = (s_\mathrm{init}, x_1)$. By induction, assume that we have concluded that $z$ is on the form
\begin{equation*}
z = t_1 \dots t_l z_l
\end{equation*}
with some remaining trajectory $z_l$, where each $t_i$ is an optimal $(q_i,x_i)$-exit trajectory. Then $t_1 \dots t_l$ is such that it will exit $q_l$ with $x_l$ where $q_l$ is a state of $\bar{Z}$. Therefore, the first state of $z_l$ will be a start state, $\mathrm{start}(M_{l+1})$, for some MM $M_{l+1}$ in $Z$, where we may take $M_{l+1}$ such that the corresponding node $q_{l+1}$ of $M_{l+1}$ gets reduced to a state in $\bar{Z}$, or is already a state of $Z$. Note that, due to optimality, the only case when $z$ does not exit $q_{l+1}$ is the case when $q_{l+1}=s_\mathrm{goal}$. In this case, $z_l = \emptyset$. If this is not the case, then we must, due to optimality, eventually exit $q_{l+1}$. Therefore, $z$ is on the form $z = t_1 \dots t_l t_{l+1} z_{l+1}$ for some trajectory $t_{l+1}$ that, by optimality, must be an optimal $(q_{l+1},x_{l+1})$-exit trajectory. By induction, $z$ is on the form \eqref{eq:z_partitioned}. To derive the form of $\bar{z}$, simply note that $t_i$ gets reduced to $(q_i,x_i)$ and hence $\bar{z} = \{(q_i,x_i)\}_{i=1}^L$. Furthermore, the last state-input pair of $t_L$ transits to $s_\mathrm{goal}$ and $\bar{\chi}(q_L,x_L) =  \chi(q_L,x_L) = s_\mathrm{goal}$. Finally, to derive $\bar{C}(\bar{z}) = C(z)$, note first that the cost of $t_i$  is 
\begin{equation*}
c^{q_i}_{x_i} + \chi(q_i,x_i) =  \bar{\chi}(q_i,x_i)
\end{equation*}
due to optimality and the construction of $\bar{Z}$. This cost coincides with the cost of $(q_i,x_i)$ in $\bar{Z}$ and hence $\bar{C}(\bar{z}) = C(z)$.
\end{proof}

\begin{lemma}\label{theorem:equavalence_lemma2}
Let $z = (q_i,x_i)_{i=1}^L$ be a feasible trajectory of $(\bar{Z},s_\mathrm{init}, s_\mathrm{goal})$.
Then, the optimal expansion $z|^{\M} = (s_i,a_i)_{i=1}^N$ of $z$ is a trajectory of $Z$ such that $s_1 = s_\mathrm{init}$, $s_{N+1} = {\chi}(s_N,a_N) = s_\mathrm{goal}$ and $\bar{C}(z) = C(z|^{\M})$. 
\end{lemma}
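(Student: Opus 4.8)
The plan is to prove this lemma as the expansion counterpart of Lemma~\ref{theorem:equavalence_lemma1}, arguing block by block. Writing $z = (q_i,x_i)_{i=1}^L$, I would first unfold the optimal expansion as the concatenation $z|^{\M} = E(q_1,x_1)\cdots E(q_L,x_L)$, where by Proposition~\ref{proposition:optimal_expansion} each $E(q_i,x_i)$ is an optimal $(q_i,x_i)$-exit trajectory of $Z$. Feasibility of $z$ guarantees that every reduced cost $\bar{\chi}(q_i,x_i)$, and hence every optimal exit cost $c^{q_i}_{x_i}$, is finite, so each block is a genuine exit trajectory and the concatenation is well defined. Since $q_1 = s_\mathrm{init}$ is a leaf state of $Z$, its expansion is the single pair $(s_\mathrm{init},x_1)$, giving $s_1 = s_\mathrm{init}$ at once.

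The main work is to show that the concatenation is a bona fide trajectory of $Z$, i.e.\ that consecutive blocks connect. I would argue by induction on $i$ that after executing $E(q_1,x_1)\cdots E(q_i,x_i)$ the machine $Z$ sits at the start state of the node corresponding to $q_{i+1}$. Each block $E(q_i,x_i)$ is contained in the subtree of $T$ rooted at the node corresponding to $q_i$ and exits it with input $x_i$; thus its last pair transitions, via $\psi$, out of that subtree. The crux is that $\bar{T}$ is exactly the subtree of $T$ on the nodes of $\M$ and that each $\bar{U}_i,\bar{D}_i$ retains the transition function $\delta$ of $U_i,D_i$ (Definition~\ref{reduced_planning_himm}). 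Consequently the recursive ``move up to the first node supporting $x_i$, then descend along start states'' prescribed by $\psi$ in $Z$ proceeds identically to $\bar{\psi}$ in $\bar{Z}$ along $\M$. Since $z$ is a trajectory of $\bar{Z}$ we have $\bar{\psi}(q_i,x_i)=q_{i+1}$, so exiting $q_i$'s subtree with $x_i$ in $Z$ lands precisely at $\mathrm{start}$ of the node corresponding to $q_{i+1}$, which is the first state of $E(q_{i+1},x_{i+1})$. Taking $i=L$ together with $\bar{\psi}(q_L,x_L)=s_\mathrm{goal}$ yields $s_{N+1}=\psi(s_N,a_N)=s_\mathrm{goal}$.

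Finally, for the cost I would match the two sums block by block. By Proposition~\ref{proposition:optimal_expansion} and the exit-cost bookkeeping, the cumulative $Z$-cost contributed by block $E(q_i,x_i)$ (its internal cost $c^{q_i}_{x_i}$ together with the cost $\chi(q_i,x_i)$ of the transition that leaves the subtree) equals the reduced cost $\bar{\chi}(q_i,x_i) = c^{q_i}_{x_i}+\chi(q_i,x_i)$ that $z$ incurs at step $i$, exactly as computed in the proof of Lemma~\ref{theorem:equavalence_lemma1}. Summing over $i=1,\dots,L$ gives $C(z|^{\M}) = \bar{C}(z)$.

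The step I expect to be the chief obstacle is the connecting-blocks induction: one must verify carefully, from the recursive up/down definition of $\psi$ in Definition~\ref{HFSM_output_def} and the fact that $\bar{T}$ is a faithful $\delta$-preserving subtree of $T$, that the exit point in $Z$ of one expanded block coincides with the start state feeding the next block. Once this agreement between $\psi$ and $\bar{\psi}$ along $\M$ is established, the endpoint identities and the cost equality follow routinely.
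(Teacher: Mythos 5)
Your proposal is correct and follows essentially the same route as the paper's proof: decompose $z|^{\M}$ into the blocks $E(q_1,x_1)\cdots E(q_L,x_L)$, use Proposition~\ref{proposition:optimal_expansion} to identify each block as an optimal $(q_i,x_i)$-exit trajectory, argue that consecutive blocks connect because $\bar{\psi}(q_i,x_i)=q_{i+1}$ forces $\psi$ applied to the last pair of $E(q_i,x_i)$ to land on $\mathrm{start}(M_{i+1})$ (or $q_{i+1}$ itself when it is a state of $Z$), and then match costs blockwise via $c^{q_i}_{x_i}+\chi(q_i,x_i)=\bar{\chi}(q_i,x_i)$. The connecting-blocks step you flag as the chief obstacle is exactly the step the paper handles (somewhat more tersely) by the same $\psi$/$\bar{\psi}$ correspondence, so no new idea is needed.
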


\begin{proof}
Let $M_i$ be the MM corresponding to the node $q_i$ in $Z$, applicable whenever $q_i$ is not a state of $Z$. We first note that 
\begin{equation*}
z|^{\M} = E(q_1,x_1) E(q_2,x_2) \dots E(q_L,x_L)
\end{equation*}
is a trajectory. Indeed, $E(q_i,x_i)$ is a trajectory that goes from $\mathrm{start}(M_i)$ (or $q_i$ if $q_i$ is a state of $Z$) and exits $q_i$ with $x_i$. In $\bar{Z}$, $\bar{\psi}(q_i,x_i) = q_{i+1}$, so ${\psi}(q_i,x_i) = \mathrm{start}(M_{i+1})$ (or ${\psi}(q_i,x_i) = q_{i+1}$ if $q_{i+1}$ is a state of $Z$), which is the start state of $E(q_{i+1},x_{i+1})$, so $z|^{\M}$ is a trajectory. Furthermore, by Proposition \ref{proposition:optimal_expansion}, $E(q,x)$ is an optimal $(q_i,x_i)$-exit trajectory. Therefore, the cumulated cost for $E(q_i,x_i)$ is $c^{q_i}_{x_i}+\chi(q_i,x_i) = \bar{\chi}(q_i,x_i)$ and hence
\begin{equation*}
\bar{C}(z) =  \sum_{i=1}^L \bar{\chi}(q_i,x_i) = C(z|^{\M}).
\end{equation*}
Note also that $E(q_1,x_1) = (s_\mathrm{init}, x_1)$, so $s_1 = s_\mathrm{init}$. Finally, $(s_N,a_N) =(s_N, x_L)$ exits $q_L$ with $x_L$ and goes to $s_\mathrm{goal}$.
\end{proof}

\begin{proof}[Proof of Theorem 1]
We first prove (i). Let $z$ be an optimal trajectory to $({Z}, s_\mathrm{init}, s_\mathrm{goal})$. By Lemma \ref{theorem:equavalence_lemma1}, the reduced trajectory
$\bar{z} = (q_i,x_i)_{i=1}^L$ is a feasible trajectory to $(\bar{Z}, s_\mathrm{init}, s_\mathrm{goal})$ such that $\bar{C}(\bar{z}) = C(z)$. Assume by contradiction that $\bar{z}$ is not an optimal trajectory. Then, 
there exists an optimal trajectory $z^*$ to $(\bar{Z}, s_\mathrm{init}, s_\mathrm{goal})$ such that $\bar{C}(z^*) < \bar{C}(\bar{z})$. By Lemma \ref{theorem:equavalence_lemma2}, $C(z^*|^{\M}) = \bar{C}(z^*) < \bar{C}(\bar{z}) = C(z)$, a contradiction, and (i) follows.

We continue with (ii). Let $z$ be an optimal trajectory to $(\bar{Z}, s_\mathrm{init}, s_\mathrm{goal})$. By Lemma \ref{theorem:equavalence_lemma2}, $z|^{\M}$ is a feasible trajectory to $(Z, s_\mathrm{init}, s_\mathrm{goal})$ such that $\bar{C}(z) = C(z|^{\M})$. Assume by contradiction that $z|^{\M}$ is not an optimal trajectory. Then, there exists an optimal trajectory $z^*$ to $(Z, s_\mathrm{init}, s_\mathrm{goal})$ such that ${C}(z^*) < {C}(z|^{\M})$. By Lemma \ref{theorem:equavalence_lemma1}, $\bar{C}(\bar{z}^*) = {C}(z^*) < {C}(z|^{\M}) = \bar{C}(z)$, a contradiction, and (ii) follows.
\end{proof}

\subsection{Proof of Proposition \ref{theorem:from_B}}

\begin{proof}
Claim (i) is clear. We prove (ii). Consider any plan $t$ from $\mathrm{start}(\bar{B})$ to $s_\mathrm{goal}$. Note that $t$ must eventually go from $\mathrm{start}(\bar{D}_{k_i})$ to the state corresponding to $\bar{D}_{k_i-1}$, and this partial cost is greater than or equal to the cumulated cost of $w_i$. Since we have no negative costs, we conclude that the cumulative cost of $t$ must be greater than or equal to the cumulative cost of  $w_1,w_2,\dots,w_j$. Hence, $w_1,w_2,\dots,w_j$ is an optimal trajectory to $(\bar{Z},\mathrm{start}(\bar{B}),s_\mathrm{goal})$.
\end{proof}

\subsection{Proof of Proposition \ref{prop:time_complexity_step_1_and_2}}

To prove Proposition \ref{prop:time_complexity_step_1_and_2}, we first provide a more detailed version of Algorithm \ref{alg:Step1} given by Algorithm \ref{alg:Step1_detailed}. Algorithm \ref{alg:Step1_detailed} calls several functions, given by Algorithm \ref{alg:compute_alpha_and_beta} to Algorithm~\ref{alg:add_arcs}.

\begin{algorithm}[h]
\caption{Reduce\_and\_solve: Detailed version}\label{alg:Step1_detailed}
\begin{algorithmic}[1]
\Require $(Z,s_\mathrm{init},s_\mathrm{goal})$ and $(c_x^M,z_x^M)_{x \in \Sigma, M \in X}$.
\Ensure An optimal trajectory $z$ to $(\bar{Z},s_\mathrm{init},s_\mathrm{goal})$.
\State \textbf{Part 1: Solve $(\bar{Z},s_\mathrm{init},\mathrm{start}(\bar{B}))$}
\State $(\bar{Z},U_{path},D_{path},\alpha,\beta)$ $\gets \mathrm{Reduce}(Z,s_{\mathrm{init}},s_{\mathrm{goal}},(c_x^M,z_x^M)_{x \in \Sigma, M \in X})$
\State $\tilde{\psi} \gets$ Compute\_transitions($\bar{Z}, U_{path}$)
\State Set $G$ to an empty graph. \Comment{To be constructed}
\State $G \gets$ Add\_nodes($\bar{Z},U_{path},D_{path},\beta,G$) \Comment{Add nodes to $G$}
\For {$i =1,\dots,n$} \Comment{Consider $\bar{U}_{i}$}
\State $(I,D) \gets$ Get\_search\_states($\bar{Z},s_\mathrm{init},i$) \Comment{States to search from and too}
\For {$s \in I$}
\State $(c_d, z_d)_{d \in D} \gets \mathrm{Dijkstra}(s, D, \bar{U}_{i})$ \Comment{Search from $s$ to all $d \in D$ in $\bar{U}_i$}
\State $G \gets$ Add\_arcs($G,s,\bar{Z},i, (c_d, z_d)_{d \in D},\tilde{\psi}$) \Comment{Add arcs to $G$}
\EndFor
\EndFor
\State $(c_1,z_1) \gets \mathrm{Dijkstra}(s_\mathrm{init}, \bar{B}, G)$ \Comment{Get optimal plan $z$ to $(\bar{Z},s_\mathrm{init},\mathrm{start}(\bar{B}))$} 
\State \textbf{Part 2: Solve $(\bar{Z},\mathrm{start}(\bar{B}),s_\mathrm{goal})$}
\State Let $z_2$ be an empty trajectory \Comment{To be constructed}
\State $(M,c) \gets (\bar{D}_{k_1},\mathrm{start}(\bar{B}))$ 
\State $g \gets$ corresponding state in $M$ to $\bar{D}_{k_1-1}$ 
\While {$c \neq s_\mathrm{goal}$}
\State $z \gets \mathrm{Dijkstra}(c,g,M)$ \Comment{Search from $c$ to $g$ in $M$}
\State $z_2 \gets z_2 z$ \Comment{Add $z$ to $z_2$}
\State $c \gets start(g)$
\If {$c \neq s_\mathrm{goal}$}
\State Let $\bar{D}_{k_i}$ be the MM with state $c$
\State $M \gets \bar{D}_{k_i}$
\State $g \gets$ corresponding state in $M$ to $\bar{D}_{k_i-1}$ 
\EndIf
\EndWhile \Comment{End of Part 2}
\State return $z \gets z_1 z_2$
\end{algorithmic}
\end{algorithm}

\begin{algorithm}
\caption{Compute paths}\label{alg:compute_alpha_and_beta}
\begin{algorithmic}[1]
\Require $(Z,s_\mathrm{init},s_\mathrm{goal})$
\Ensure $(U_{path}, D_{path}, \alpha, \beta)$
\State Let $U_{path}$ be an empty sequence \Comment{Compute the path of $U$s}
\State $k \gets 0$
\State $U_0 \gets s_\mathrm{init}$
\State $U_{current} \gets U_0$ 
\State Add $U_0$ to $U_{path}$
\While {true}
\State $U_{k+1} \gets parent(U_{current})$
\If {$U_{k+1}$ not empty}
\State Add $U_{k+1}$ to $U_{path}$
\State $k \gets k+1$
\State $U_{current} \gets U_k$
\Else
\State \textbf{break}
\EndIf
\EndWhile
\State Let $D_{path}$ be an empty sequence \Comment{Compute the path of $D$s}
\State $k \gets 0$
\State $D_0 \gets s_\mathrm{goal}$
\State $D_{current} \gets D_0$
\State Add $D_0$ to $D_{path}$
\While {true}
\State $D_{k+1} \gets parent(D_{current})$
\If {$D_{k+1}$ not empty}
\State Add $D_{k+1}$ to $D_{path}$
\State $k \gets k+1$
\State $D_{current} \gets D_k$
\Else
\State \textbf{break}
\EndIf
\EndWhile
\State $(i,j) \gets (\mathrm{length}(U_{path}),\mathrm{length}(D_{path}))$ \Comment{Calculate $\alpha$ and $\beta$}
\While {$U_{path}[i] = D_{path}[j]$} 
\State $(i,j) \gets (i-1,j-1)$
\EndWhile
\State $(\alpha,\beta) \gets (i-1,j-1)$
\State {return} $(U_{path}, D_{path}, \alpha, \beta)$
\end{algorithmic}
\end{algorithm}

\begin{algorithm}[h]
\caption{Reduce}\label{alg:reduce}
\begin{algorithmic}[1]
\Require $(Z,s_{\mathrm{init}},s_{\mathrm{goal}})$ and $(c_x^M,z_x^M)_{x \in \Sigma, M \in X})$
\Ensure $(\bar{Z},U_{path},D_{path},\alpha,\beta)$
\State $(U_{path}, D_{path}, \alpha, \beta) \gets$ Compute\_paths($Z,s_\mathrm{init},s_\mathrm{goal}$)
\State Let $\bar{X}$ be an empty set \Comment{To be constructed}
\State Let $\bar{T}$ be an empty tree \Comment{To be constructed}
\State Construct $\bar{U}_1$ from $U_1$ and add to $\bar{X}$
\State Add node $\bar{U}_1$ to $\bar{T}$ with arcs out as $U_1$ in $T$ but no destination nodes
\State Save $q$ where $U_2 \xrightarrow{q} U_1 \in T$
\For {$i = 2,\dots,length(U_{path})-1$}
\State Construct $\bar{U}_i$ from $U_i$ and add to $\bar{X}$
\State Add node $\bar{U}_i$ to $\bar{T}$ with arcs out as $U_i$ in $T$ but no destination nodes
\State Add destination node $\bar{U}_{i-1}$ to $\bar{U}_{i} \xrightarrow{q}$ so that $\bar{U}_{i} \xrightarrow{q}\bar{U}_{i-1} \in \bar{T}$
\State Save $q$ where $U_{i+1} \xrightarrow{q} U_i \in T$
\EndFor
\If {$\beta = 0$} \Comment{Then nothing more to do}
\State {return} $(\bar{Z} = (\bar{X},\bar{T}), U_{path})$
\EndIf 
\State Construct $\bar{D}_1$ from $D_1$ and add to $\bar{X}$
\State Add node $\bar{D}_1$ to $\bar{T}$ with arcs out as $D_1$ in $T$ but no destination nodes
\State Save $q$ where $D_2 \xrightarrow{q} D_1 \in T$
\For {$i = 2,\dots,\beta$}
\State Construct $\bar{D}_i$ from $D_i$ and add to $\bar{X}$
\State Add node $\bar{D}_i$ to $\bar{T}$ with arcs out as $D_i$ in $T$ but no destination nodes
\State Add destination node $\bar{D}_{i-1}$ to $\bar{D}_{i} \xrightarrow{q}$ so that $\bar{D}_{i} \xrightarrow{q}\bar{D}_{i-1} \in \bar{T}$
\State Save $q$ where $D_{i+1} \xrightarrow{q} D_i \in T$
\EndFor
\State Add destination node $\bar{D}_{\beta}$ to $\bar{U}_{\alpha+1} \xrightarrow{q}$ so that $\bar{U}_{\alpha+1} \xrightarrow{q} \bar{D}_{\beta} \in \bar{T}$
\State {return} $(\bar{Z} = (\bar{X},\bar{T}),U_{path})$
\end{algorithmic}
\end{algorithm}

\begin{algorithm}[h] 
\caption{Compute\_transitions}\label{alg:compute_exit_transitions}
\begin{algorithmic}[1]
\Require $\bar{Z}$ and $U_{path}$ 
\Ensure Function $\tilde{\psi}$ prescribing: where one goes if exiting $\bar{U}_i$ with $x \in \Sigma$ (treating $\bar{B}$ as a state), saved as $\tilde{\psi}(\bar{U}_{i-1},x)$; or where one goes if entering $\bar{U}_i$ (treating $\bar{B}$ as a state), saved as $\tilde{\psi}(\bar{U}_{i-1})$.
\For {$x \in \Sigma$} \Comment{Compute exit transitions}
\State $exit(x) \gets \emptyset$
\EndFor
\For {$i = n,n-1,\dots,2$}
\For {$x \in \Sigma$}
\If {$\delta_{\bar{U}_i}(\bar{U}_{i-1},x) \neq \emptyset$}
\State $exit(x) \gets \delta_{\bar{U}_i}(\bar{U}_{i-1},x)$
\EndIf
\State $\tilde{\psi}(\bar{U}_{i-1},x) \gets exit(x)$
\EndFor
\EndFor
\State $go\_here \gets s_\mathrm{init}$ \Comment{Compute enter transitions}
\State $\tilde{\psi}(\bar{U}_0) \gets s_\mathrm{init}$
\For {$i = 1,\dots,n-1$}
\If {$s(\bar{U}_i) \neq \bar{U}_{i-1}$}
\State $go\_here \gets s(\bar{U}_i)$
\EndIf
\State $\tilde{\psi}(\bar{U}_i) \gets go\_here$
\EndFor
\State {return} $\tilde{\psi}$
\end{algorithmic}
\end{algorithm}

\begin{algorithm}[h]
\caption{Add\_nodes}\label{alg:add_nodes}
\begin{algorithmic}[1]
\Require $\bar{Z},U_{path},D_{path},\beta$ and empty graph $G$. 
\Ensure Graph $G$ with added nodes.
\State Add nodes $s_\mathrm{init}$, $s(\bar{U}_1)$ and $\bar{B}$ to $G$. 
\For {$i =2,\dots,n$}
\If {$s(\bar{U}_i) \neq \bar{U}_{i-1}$}
\State Add node $s(\bar{U}_i)$ to $G$
\EndIf
\For {$x \in \Sigma$} 
\If {$\delta_{\bar{U}_i}(\bar{U}_{i-1},x) \neq \emptyset$} 
\State Add node $\delta_{\bar{U}_i}(\bar{U}_{i-1},x)$ to $G$
\EndIf
\EndFor
\EndFor
\State return $G$
\end{algorithmic}
\end{algorithm}

\begin{algorithm}[h]
\caption{Get\_search\_states}\label{alg:get_search_states}
\begin{algorithmic}[1]
\Require $\bar{Z},s_\mathrm{init},i$
\Ensure States to search from $I$ and states to search too $D$ in $\bar{U}_i$
\State Get\_search\_states($\bar{Z},s_\mathrm{init},i$):
\State Let $I$ be an empty set \Comment{States we should search from}
\If {$s(\bar{U}_i) \neq \bar{U}_{i-1}$}
\State Add $s(\bar{U}_i)$ to $I$.
\EndIf
\If {$i=1$}
\State Add $s_\mathrm{init}$ to $I$
\Else
\For {$x \in \Sigma$} 
\If {$\delta_{\bar{U}_i}(\bar{U}_{i-1},x) \neq \emptyset$} 
\State Add $\delta_{\bar{U}_i}(\bar{U}_{i-1},x)$ to $I$
\EndIf
\EndFor
\EndIf
\State Let $D$ be an empty set \Comment{States we should search to}
\If {$i > 1$} 
\State Add $\bar{U}_{i-1}$ to $D$
\EndIf
\If {$i < n$} \Comment{Add exits as destinations}
\State Add $\{\mathrm{exit}_x\}_{x \in \Sigma}$ to $D$ 
\EndIf
\If {$i = \alpha+1$} 
\State Add $\bar{B}$ to $D$
\EndIf
\State {return} $(I,D)$
\end{algorithmic}
\end{algorithm}

\begin{algorithm}[h]
\caption{Add\_arcs}\label{alg:add_arcs}
\begin{algorithmic}[1]
\Require $G,s,\bar{Z},i, (c_d, z_d)_{d \in D},\tilde{\psi}$
\Ensure Graph $G$ with added arcs.
\If {$i>1$ and $c_{\bar{U}_{i-1}} < \infty$} \Comment{Add arcs to $G$}

\State Add arc $s \rightarrow \tilde{\psi}(\bar{U}_{i-1})$ labelled $(c_{\bar{U}_{i-1}}, z_{\bar{U}_{i-1}})$ 
\EndIf
\For {$\mathrm{exit}_x \in D$}
\If {$\tilde{\psi}(\bar{U}_{i},x) \neq \emptyset$ and $c_{\mathrm{exit}_x} < \infty$} \Comment{Transition in $\bar{Z}$ exists} 
\State Add arc $s \rightarrow \tilde{\psi}(\bar{U}_{i},x)$ to $G$ labelled $(c_{\mathrm{exit}_x},z_{\mathrm{exit}_x})$ 
\EndIf
\EndFor
\If {$i = \alpha+1$ and $c_{\bar{B}} < \infty$} 
\State Add arc $s \rightarrow \bar{B}$ to $G$ labelled $(c_{\bar{B}}, z_{\bar{B}})$
\EndIf
\State {return} $G$
\end{algorithmic}
\end{algorithm}

We first calculate the time complexity of all the algorithms that it calls, stated as lemmas.

\begin{lemma}\label{lemma:alg4}
Algorithm \ref{alg:compute_alpha_and_beta} has time complexity $O(\mathrm{depth}(Z))$.
\end{lemma}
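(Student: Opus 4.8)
The plan is to bound the running time of each of the three phases of Algorithm~\ref{alg:compute_alpha_and_beta} separately and observe that each is $O(\mathrm{depth}(Z))$, so that their sum is again $O(\mathrm{depth}(Z))$. The three phases are: the first \emph{while} loop building $U_{path}$, the second \emph{while} loop building $D_{path}$, and the final \emph{while} loop that decrements $(i,j)$ to locate $\alpha$ and $\beta$.

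First I would analyse the two loops that build $U_{path}$ and $D_{path}$. Each iteration performs a single $parent(\cdot)$ lookup, one append to a sequence, and a constant number of counter updates, all of which are $O(1)$ under a tree representation carrying parent pointers. The key observation is that these loops walk from the MM containing $s_\mathrm{init}$ (respectively $s_\mathrm{goal}$) up to the root of $T$, terminating as soon as $parent(\cdot)$ is empty; hence the number of iterations equals the length of this root-to-leaf path, which is at most $\mathrm{depth}(Z)$ by the definition of $\mathrm{depth}(Z)$ as the maximal directed path length in $T$. Consequently the final sequences $U_{path}$ and $D_{path}$ have length $O(\mathrm{depth}(Z))$, and each loop costs $O(\mathrm{depth}(Z))$.

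Next I would handle the final loop, which starts with $i,j$ at $\mathrm{length}(U_{path})$ and $\mathrm{length}(D_{path})$ and decrements both as long as the indexed nodes $U_{path}[i]$ and $D_{path}[j]$ coincide. Each iteration performs one node-identity comparison, two decrements, and one indexing operation, all $O(1)$, and the number of iterations is bounded by $\min(\mathrm{length}(U_{path}),\mathrm{length}(D_{path})) \le \mathrm{depth}(Z)$. Thus this loop also costs $O(\mathrm{depth}(Z))$. The remaining lines (the initialisations of $k$, $U_0$, $D_0$, the assignment to $(\alpha,\beta)$, and the return) are $O(1)$. Summing the three $O(\mathrm{depth}(Z))$ contributions with the constant overhead yields the claimed $O(\mathrm{depth}(Z))$ bound.

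I do not anticipate a genuine obstacle here, since the argument is just per-iteration cost times iteration count. The only point requiring explicit care is justifying that every elementary operation inside the loops is constant time: the $parent$ lookup (constant under parent pointers), the sequence append and random-access indexing $U_{path}[i]$, $D_{path}[j]$ (constant for arrays), and the equality test comparing two tree nodes (constant for identity comparison). Once these constant-time assumptions on the data structures are stated, the bound follows immediately; this is the step I would spell out most carefully, as it is the only place where a naive implementation could inflate the per-iteration cost.
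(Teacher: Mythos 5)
Your proposal is correct and matches the paper's intent exactly: the paper declares the proof of this lemma ``straightforward by just going through all the steps'' of Algorithm~\ref{alg:compute_alpha_and_beta}, and your walk-through --- bounding each of the three while loops by the root-to-leaf path length in $T$ and noting each iteration costs $O(1)$ --- is precisely that omitted argument, spelled out with the appropriate care about constant-time parent lookups and indexing.
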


\begin{lemma}\label{lemma:alg5}
Algorithm \ref{alg:reduce} has time complexity $O(\mathrm{depth}(Z) |b_s|)$.
\end{lemma}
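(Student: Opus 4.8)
The plan is to bound the running time contribution by contribution, exploiting the fact that the two \textbf{for}-loops iterate over the paths $U_{path}$ and $D_{path}$, whose lengths are controlled by the depth of $T$. First I would observe that the initial call to Compute\_paths (line 1) costs $O(\mathrm{depth}(Z))$ by Lemma \ref{lemma:alg4}, and that the returned sequences $U_{path}$ and $D_{path}$ each have length at most $\mathrm{depth}(Z)+1$, since each lists the MMs along a single branch of $T$ from a leaf ($s_\mathrm{init}$ or $s_\mathrm{goal}$) up to the root. Consequently the loop over $i=2,\dots,\mathrm{length}(U_{path})-1$ and the loop over $i=2,\dots,\beta$ (where $\beta\le\mathrm{length}(D_{path})$) each execute $O(\mathrm{depth}(Z))$ iterations, and the constant number of operations outside the loops (constructing $\bar{U}_1$ and $\bar{D}_1$, attaching the final destination arc, and the conditional early return when $\beta=0$) contribute no more than one extra iteration's worth of work.

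The second step is to bound the cost of a single loop iteration. Each iteration constructs a reduced MM $\bar{U}_i$ (resp.\ $\bar{D}_i$) from $U_i$ (resp.\ $D_i$) and performs a constant number of tree manipulations (adding the node together with its outgoing arcs but no destinations, attaching one destination node, and saving the label $q$). These tree manipulations touch at most the $|Q(U_i)|\le b_s$ outgoing arcs of the node, so they cost $O(b_s)$. The core cost is building the reduced output function $\bar{\gamma}$ according to \eqref{eq:bar_gamma}: this is done by iterating over the state-input pairs $(q,x)$ of $U_i$ and, in each case, reading off the precomputed optimal exit cost $c_x^q$ (available in constant time from the offline-step input $(c_x^M,z_x^M)$) and adding it to $\gamma(q,x)$ when prescribed. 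Since $U_i$ has at most $b_s$ states, this is $O(b_s)$ work, treating $|\Sigma|$ as a constant in keeping with the paper's convention (retaining $|\Sigma|$ explicitly would give $O(b_s|\Sigma|)$ per iteration).

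Multiplying the per-iteration cost $O(b_s)$ by the $O(\mathrm{depth}(Z))$ iterations of the two loops, and adding the $O(\mathrm{depth}(Z))$ cost of Compute\_paths, yields the claimed total $O(\mathrm{depth}(Z)\,b_s)$, matching the bound $O(\mathrm{depth}(Z)|b_s|)$ in the statement.

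The main obstacle I anticipate is the careful justification of the per-iteration bound, specifically that constructing $\bar{\gamma}$ from \eqref{eq:bar_gamma} is genuinely $O(b_s)$. One must verify that each of the four cases in \eqref{eq:bar_gamma} is evaluable in constant time, which hinges on two implementation facts: (i) the exit costs $c_x^q$ are supplied as a lookup table by the offline step rather than recomputed here; and (ii) the predicate ``$(U_i\xrightarrow{q}M)\in T$ with $M\in\M$'' is decidable in constant time, for instance by marking the nodes lying on $U_{path}$ and $D_{path}$ while Compute\_paths runs, so that membership in $\M$ is a single marked-bit test. Once these constant-time primitives are established, the remainder is a routine summation over the two paths.
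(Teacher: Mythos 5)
Your overall route is the same as the paper's: bound the path lengths by $\mathrm{depth}(Z)+1$, charge $O(\mathrm{depth}(Z))$ iterations to the two loops, and bound the per-iteration cost. The gap is in the per-iteration accounting. You materialize $\bar{\gamma}$ explicitly by iterating over all state--input pairs of $U_i$, which costs $O(b_s|\Sigma|)$ per iteration and hence $O(\mathrm{depth}(Z)\,b_s|\Sigma|)$ in total; you then discard the $|\Sigma|$ factor by claiming the paper treats $|\Sigma|$ as a constant. That claim is not supported here: the neighbouring lemmas (Lemmas \ref{lemma:alg6}--\ref{lemma:alg9}) and Proposition \ref{prop:time_complexity_step_1_and_2} all carry $|\Sigma|$ explicitly, so the stated bound $O(\mathrm{depth}(Z)\,b_s)$ really is meant without a hidden $|\Sigma|$, and your argument as written only establishes the weaker bound.

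The paper closes this by \emph{not} constructing $\bar{\gamma}$ as a table at all: $\bar{U}_i$ stores $\gamma$ by reference, and $\bar{\gamma}(q,x)$ is evaluated on demand in $O(1)$ via \eqref{eq:bar_gamma} --- checking $\delta(q,x)\neq\emptyset$ is $O(1)$, and the membership test $(U_i\xrightarrow{q}M)\in T$ with $M\in\M$ reduces to checking $M\in\{U_{i-1},D_\beta\}$, also $O(1)$. With that device, constructing $\bar{U}_i$ costs $O(1)$, and the only $O(b_s)$ work per iteration is copying the (at most $b_s$) outgoing arcs of $U_i$ into $\bar{T}$, which yields the claimed $O(\mathrm{depth}(Z)\,b_s)$. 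You already have both ingredients needed for this (the lookup table for $c_x^q$ and the constant-time membership test for $\M$); you just need to use them to avoid materializing $\bar{\gamma}$ rather than to price each of its $b_s|\Sigma|$ entries. Note this lazy evaluation also matters downstream: the Dijkstra calls on $\bar{U}_i$ in Algorithm \ref{alg:Step1_detailed} query $\bar{\gamma}$ and rely on each query being $O(1)$.
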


\begin{lemma}\label{lemma:alg6}
Algorithm \ref{alg:compute_exit_transitions} has time complexity $O(|\Sigma| \mathrm{depth}(Z))$.
\end{lemma}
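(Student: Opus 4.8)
The plan is a direct line-by-line accounting of the operations performed by Algorithm~\ref{alg:compute_exit_transitions}, followed by bounding the length $n$ of the path $U_{path}$ in terms of $\mathrm{depth}(Z)$. The key observation is that every primitive operation inside the loops---evaluating a transition $\delta_{\bar{U}_i}(\bar{U}_{i-1},x)$, comparing two nodes, and assigning to the tables $exit(\cdot)$, $go\_here$, and $\tilde{\psi}(\cdot)$---is $O(1)$ under the standard model used throughout Section~\ref{hierarchical_planning}, in which transition functions and output tables support constant-time lookup and assignment.

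First I would handle the exit-transition block. The initialisation loop over $x \in \Sigma$ (setting $exit(x)\gets\emptyset$) costs $O(|\Sigma|)$. The nested loop then ranges over $i = n, n-1, \dots, 2$ (at most $n-1$ outer iterations) and, for each $i$, over all $x \in \Sigma$, doing only constant work per $(i,x)$ pair; this contributes $O(n|\Sigma|)$. Next I would treat the enter-transition block: the two assignments preceding the final loop are $O(1)$, and the loop over $i = 1,\dots,n-1$ performs a constant-time comparison and assignment per iteration, contributing $O(n)$. Summing the three contributions gives $O(|\Sigma|) + O(n|\Sigma|) + O(n) = O(n|\Sigma|)$.

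The final step, and the only place a structural fact is needed, is to bound $n$. By construction $U_1,\dots,U_n$ is the path of MMs in $T$ from the MM containing $s_\mathrm{init}$ up to the root, so it is a directed path in $T$, whence $n \le \mathrm{depth}(Z)+1 = O(\mathrm{depth}(Z))$. Substituting yields the claimed bound $O(|\Sigma|\,\mathrm{depth}(Z))$. I expect no genuine obstacle; the only point requiring slight care is confirming the $O(1)$-per-operation assumption for the table accesses, since a naive implementation of $\tilde{\psi}$ without constant-time indexing could inflate the per-access cost and break the bound.
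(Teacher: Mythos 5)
Your proof is correct and matches the paper's approach: the paper simply declares Lemma~\ref{lemma:alg6} "straightforward by just going through all the steps in the corresponding algorithm," which is precisely the line-by-line accounting you carry out, combined with the bound $n \le \mathrm{depth}(Z)+1$ on the length of $U_{path}$. Your explicit remark about needing constant-time table access for $\tilde{\psi}$ is a reasonable (and implicitly assumed) refinement, not a deviation.
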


\begin{lemma}\label{lemma:alg7}
Algorithm \ref{alg:add_nodes} has time complexity $O(|\Sigma| \mathrm{depth}(Z))$.
\end{lemma}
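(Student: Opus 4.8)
The plan is a direct line-by-line accounting of Algorithm~\ref{alg:add_nodes}, exploiting the fact that the outer loop index $i$ ranges only over the path $\bar{U}_1,\dots,\bar{U}_n$, whose length $n$ is controlled by the depth of $T$. First I would establish the key structural bound $n = O(\mathrm{depth}(Z))$. By the construction in Section~\ref{online_step}, $U_1,\dots,U_n$ is the path of MMs in $T$ from the MM containing $s_\mathrm{init}$ up to the root, so $n$ is at most one more than the length of this directed path in $T$, which is at most $\mathrm{depth}(Z)$ by the definition of the depth as the maximum directed path length. Hence $n \leq \mathrm{depth}(Z)+1 = O(\mathrm{depth}(Z))$.

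Next I would bound the cost of a single iteration of the outer for-loop. Under the standard data-structure model (the one used implicitly in the other complexity results of this section), I assume $G$ is maintained so that inserting a node costs $O(1)$, and that each transition lookup $\delta_{\bar{U}_i}(\bar{U}_{i-1},x)$ together with the comparison $s(\bar{U}_i)\neq\bar{U}_{i-1}$ costs $O(1)$. Then the conditional block handling the start state contributes $O(1)$, while the inner loop over $x\in\Sigma$ performs $|\Sigma|$ iterations, each doing a constant amount of work (one emptiness test and at most one insertion), for a total of $O(|\Sigma|)$ per outer iteration.

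Finally I would combine the pieces: the three initial node insertions on line~1 cost $O(1)$, and the outer loop contributes $n\cdot O(|\Sigma|) = O(|\Sigma|\,\mathrm{depth}(Z))$ using the bound on $n$. Summing yields the claimed $O(|\Sigma|\,\mathrm{depth}(Z))$. The only point requiring any care is the $O(1)$-per-operation bookkeeping for node insertion and transition lookup; since this is the same model underlying the companion lemmas (Lemmas~\ref{lemma:alg4}--\ref{lemma:alg6}), I expect it to need no more than a sentence, and I do not anticipate any genuine obstacle in this routine accounting.
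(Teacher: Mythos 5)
Your proposal is correct and follows exactly the approach the paper takes: the paper dismisses Lemma~\ref{lemma:alg7} as ``straightforward by just going through all the steps in the corresponding algorithm,'' and your line-by-line accounting (bounding $n$ by $\mathrm{depth}(Z)$ and each outer iteration by $O(|\Sigma|)$) is precisely that argument carried out explicitly.
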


\begin{lemma}\label{lemma:alg8}
Algorithm \ref{alg:get_search_states} has time complexity $O(|\Sigma|)$.
\end{lemma}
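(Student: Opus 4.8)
\textbf{Proof proposal for Lemma \ref{lemma:alg8}.}

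The plan is a straightforward line-by-line accounting of Algorithm \ref{alg:get_search_states}, since the routine performs no recursion and no nested iteration over anything other than the input alphabet. First I would fix the standard cost model used throughout this section: each insertion into the sets $I$ and $D$ costs $O(1)$ (using hash sets), each evaluation of the partial transition function $\delta_{\bar{U}_i}(\bar{U}_{i-1},x)$ together with its emptiness test costs $O(1)$, and the comparisons $s(\bar{U}_i) \neq \bar{U}_{i-1}$, $i = 1$, $i < n$ and $i = \alpha+1$ are each $O(1)$ (the indices $\alpha$, $\beta$ and the paths are already available from Algorithm \ref{alg:reduce} / Algorithm \ref{alg:compute_alpha_and_beta}, so no recomputation is needed here).

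Next I would partition the body into its constant-time part and its $\Sigma$-dependent part. The initialisation of $I$, the conditional insertion of $s(\bar{U}_i)$, the branch that inserts $s_\mathrm{init}$ when $i=1$, the initialisation of $D$, the conditional insertion of $\bar{U}_{i-1}$, and the conditional insertion of $\bar{B}$ when $i=\alpha+1$ are each executed at most once and each cost $O(1)$, contributing $O(1)$ in total. The only unbounded work comes from the two loops over $x \in \Sigma$: the loop populating $I$ with the successors $\delta_{\bar{U}_i}(\bar{U}_{i-1},x)$, and the step adding the exit states $\{\mathrm{exit}_x\}_{x \in \Sigma}$ to $D$. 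Each loop runs at most $|\Sigma|$ iterations with $O(1)$ work per iteration (one lookup, one test, one insertion), so each contributes $O(|\Sigma|)$.

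Summing the two parts gives $O(1) + O(|\Sigma|) + O(|\Sigma|) = O(|\Sigma|)$, which is the claimed bound. I do not expect any real obstacle: the only thing worth flagging is the cost-model assumption that set insertion and the emptiness test of $\delta_{\bar{U}_i}$ are $O(1)$, which is consistent with the conventions already invoked in the other lemmas of this subsection; once that is granted, the result is immediate because the routine touches each symbol of $\Sigma$ a bounded number of times and does only bounded auxiliary bookkeeping.
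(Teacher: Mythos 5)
Your proposal is correct and matches the paper's approach: the paper simply states that Lemma \ref{lemma:alg8} (along with the other bookkeeping lemmas) follows ``by just going through all the steps in the corresponding algorithm,'' which is exactly the line-by-line accounting you carry out. Your explicit identification of the two $\Sigma$-loops as the only non-constant work, under the same $O(1)$ cost model for set insertion and transition-function lookup used elsewhere in the appendix, fills in the details the paper omits without deviating from its argument.
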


\begin{lemma}\label{lemma:alg9}
Algorithm \ref{alg:add_arcs} has time complexity $O(|\Sigma|)$.
\end{lemma}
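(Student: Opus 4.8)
The plan is to read off the running time directly from the three structural blocks of Algorithm~\ref{alg:add_arcs}, under the standard cost model in which primitive operations run in constant time: looking up a stored value of $\tilde{\psi}$, comparing a cost $c_d$ against $\infty$, and inserting a single arc into the graph $G$ (with $G$ maintained, say, as an adjacency list) are each $O(1)$. The costs and trajectories $(c_d, z_d)_{d \in D}$ arrive as input, so no recomputation is performed inside the routine.

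First I would handle the opening conditional (lines 1--3): it performs one test on $i$, one comparison $c_{\bar{U}_{i-1}} < \infty$, one lookup $\tilde{\psi}(\bar{U}_{i-1})$, and at most one arc insertion, hence $O(1)$. The closing conditional (lines 10--12) is identical in spirit: one test $i = \alpha+1$, one comparison against $\infty$, and at most one arc insertion, again $O(1)$.

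The only block whose cost can grow is the for-loop (lines 4--9), and the key step is to bound its iteration count. The loop ranges over the exit states $\mathrm{exit}_x \in D$, and by the construction of $D$ in Algorithm~\ref{alg:get_search_states} the exit states form the set $\{\mathrm{exit}_x\}_{x \in \Sigma}$, one per input symbol, so there are at most $|\Sigma|$ of them. Each iteration performs a lookup $\tilde{\psi}(\bar{U}_i, x)$, a comparison $c_{\mathrm{exit}_x} < \infty$, and at most one arc insertion, i.e.\ $O(1)$ work. Summing over the iterations gives $O(|\Sigma|)$ for the loop, and adding the two constant-time conditionals yields the claimed $O(|\Sigma|)$ total.

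I do not expect a genuine obstacle here, since the statement is a direct instruction count. The only point requiring care is to make the constant-time cost model for the primitive graph and table operations explicit, because the whole bound rests on two facts: each arc insertion is $O(1)$, and the number of exit states iterated over is at most $|\Sigma|$. Once these are stated, the result follows immediately by summing the $O(1)$ conditionals with the $O(|\Sigma|)$ loop.
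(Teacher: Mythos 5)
Your proof is correct and matches the paper's approach: the paper simply states that Lemma~\ref{lemma:alg9} follows ``by just going through all the steps in the corresponding algorithm,'' which is precisely the instruction count you carry out (two $O(1)$ conditionals plus a loop over at most $|\Sigma|$ exit states, each iteration $O(1)$ under the reference-based constant-time model the paper uses elsewhere). Your write-up is in fact more explicit than the paper's own justification.
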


The proof of the Lemmas \ref{lemma:alg4}, \ref{lemma:alg6}, \ref{lemma:alg7}, \ref{lemma:alg8}, and \ref{lemma:alg9} are straightforward by just going through all the steps in the corresponding algorithm. The proof of Lemma \ref{lemma:alg5} demands a bit more reasoning and we therefore provide a proof.

\begin{proof}[Proof of Lemma \ref{lemma:alg5}]
We walk through the algorithm. Line 1 has time complexity $O(\mathrm{depth}(Z))$ according to Lemma \ref{lemma:alg4}. Line 2 and 3 has time complexity $O(1)$. Concerning line 4, constructing $\bar{U}_i$ from $U_i$ for any $i$ be done in constant time $O(1)$, since the only change in $\bar{U}_i$ is $\bar{\gamma}$, which can be assigned in constant time using Equation \eqref{eq:bar_gamma} and storing $\gamma$ as a reference. Therefore, line 4 and 8 has time complexity $O(1)$.

\begin{remark}\label{remark_computing_bar_gamma}
Furthermore, note that $\bar{\gamma}$ can also be computed in constant time $O(1)$ from the reference $\gamma$ using Equation \eqref{eq:bar_gamma} (which is useful later in the search part to avoid additional complexities when evaluating $\bar{\gamma}$). To see this, note that checking the condition $\delta(q,x) \neq \emptyset$ in Equation \eqref{eq:bar_gamma} has time complexity $O(1)$. Also, checking the condition ${U}_i \xrightarrow{q} M \in {T}, M \in \M$ translates to checking if there exists $M \in \{ U_{i-1}, D_\beta \}$ such that ${U}_i \xrightarrow{q} M \in {T}$, which can be done in constant time. Hence, given $U_{path}$, $D_{path}$, $\beta$, all values of $\gamma$ and all the values $c^q_x$ obtained from the offline step, computing a value of $\bar{\gamma}$ has time complexity $O(1)$.
\end{remark}
Line 5 and 9 has time complexity $O(|b_s|)$ by going through all arcs. Line 6 and 11 has time complexity $O(1)$. Line 10 has time complexity $O(1)$. The for-loop in line 7-12 will be iterated at most $\mathrm{depth}(Z)$ times, and hence, the whole for-loop has time complexity $O(\mathrm{depth}(Z) |b_s|)$. We conclude that line 4-12 has time complexity $O(\mathrm{depth}(Z) |b_s|)$. Line 13-15 has time complexity $O(1)$. Line 16-24 has time complexity $O(\mathrm{depth}(Z) |b_s|)$ analogous to line 4-12. Finally, line 25-26 has time complexity $O(1)$. Summing all the complexities up, we get that Algorithm \ref{alg:reduce} has time complexity $O(\mathrm{depth}(Z) |b_s|)$.
\end{proof}
\noindent
We are now in a position to prove the time complexity of Part 1 of Algorithm \ref{alg:Step1_detailed}:
\begin{proof}
We walk through Part 1 of of Algorithm~\ref{alg:Step1_detailed}. Line 2 has time complexity $O(b_s \mathrm{depth}(Z))$ according to Lemma \ref{lemma:alg5}. Line 3 has time complexity $O(|\Sigma| \mathrm{depth}(Z))$ according to Lemma \ref{lemma:alg6}. Line 4 has time complexity $O(1)$. Line 5 has time complexity $O(|\Sigma| \mathrm{depth}(Z))$ according to Lemma \ref{lemma:alg7}. Line 7 has time complexity $O(|\Sigma|)$ according to Lemma \ref{lemma:alg8}. Concerning line 9, Dijkstra's algorithm has time complexity $O(E+V \log V)$ where $E$ is the number of arcs and $V$ is the number of (graph) nodes.\footnote{Here, we also use that $\bar{\gamma}$ of $\bar{U}_i$ can be computed in constant time, see Remark \ref{remark_computing_bar_gamma}.} In our case, we have at most $b_s+| \Sigma |$ (graph) nodes since we have at most $b_s$ (graph) nodes from $\bar{U}_i$ and then a maximum of $| \Sigma |$ additional (graph) nodes corresponding to all the possible exits.\footnote{Constructing the additional arcs to these exit-nodes has complexity $O(b_s |\Sigma|)$ and will not affect the overall complexity of line 9.} The maximum number of arcs are $b_s |\Sigma|$. Therefore, line 9 has time complexity $O(|\Sigma| b_s + (b_s+|\Sigma|) \log (b_s+|\Sigma|))$. Line 10 has time complexity $O(|\Sigma|)$ by Lemma \ref{lemma:alg9}. Let's now analyse line 8-11. The maximum number of elements in $I$ is $|\Sigma|+1$, and hence, line 8-11 has time complexity
\begin{align*}
O \big ( (|\Sigma|+1) \cdot [O(|\Sigma| b_s + (b_s+|\Sigma|) \log (b_s+|\Sigma|))+O(|\Sigma|)] \big ) \\ =  O \big (|\Sigma|^2 b_s + |\Sigma| (b_s+|\Sigma|) \log(b_s+|\Sigma|) \big ) =: H
\end{align*} 
Therefore, line 6-12 has time complexity
\begin{align*}
O(\mathrm{depth}(Z) [O(|\Sigma|)+H]) = \\ O \Big ( \big (|\Sigma|^2 b_s + |\Sigma| (b_s+|\Sigma|) \log(b_s+|\Sigma|) \big ) \mathrm{depth}(Z) \Big ).
\end{align*}
We continue with line 13. Note that $G$ has at most ${2+(n-1)[|\Sigma|+1]+1}$ (graph) nodes\footnote{Here, 2 is the number of (graph) nodes from $\bar{U}_1$, $(n-1)[|\Sigma|+1]$ is the number of (graph) nodes from the remaining $\bar{U}_i$, and 1 is to account for the (graph) node $\bar{B}$.}, which can be bounded by $C |\Sigma| \depth(Z)$ for some constant $C>0$. Denoting the number of (graph) nodes in $G$ by $V$, we then have
\begin{align*}
V \log (V) \leq C |\Sigma| \depth(Z) \log \big ( C |\Sigma| \depth(Z) \big ) = \\
C |\Sigma| \depth(Z) \big ( \log(C) + \log(|\Sigma| \depth(Z)) \big ) \leq \\
\tilde{C} |\Sigma| \depth(Z) \log(|\Sigma| \depth(Z))
\end{align*}
for some constant $\tilde{C}$ provided $|\Sigma|>1$ or $\depth(Z) >1$. We conclude that $V \log (V) = O(|\Sigma| \depth(Z) \log(|\Sigma| \depth(Z)))$. Furthermore, the maximum number of arcs in $G$ are
\begin{align*}
2 |\Sigma|+\max (n-2,0) \cdot (|\Sigma|+1)^2+(|\Sigma|+1)\cdot 1+(|\Sigma|+1) \\ = O( |\Sigma|^2 \depth(Z)).
\end{align*}
Here, the first term $2 |\Sigma|$ comes from all the arcs from $\bar{U}_1$ (2 graph nodes with $|\Sigma|$ arcs each), the second term comes from all the arcs from all $\bar{U}_i$ with $1<i<n$ ($|\Sigma|+1$ graph nodes with $|\Sigma|+1$ arcs each corresponding to all exits and going to $\bar{U}_{i-1}$), the third term comes from $\bar{U}_n$ ($|\Sigma|+1$ graph nodes with arcs going to $\bar{U}_{n-1}$), and the last term $(|\Sigma|+1)$ comes from the $\bar{U}_{i}$ that has $\bar{B}$ as a state ($|\Sigma|+1$ graph nodes that could all have 1 arc each to $\bar{B}$). Therefore, by the time complexity of Dijkstra's algorithm, line 13 has time complexity
\begin{equation*}
O( |\Sigma|^2 \depth(Z)+|\Sigma| \depth(Z) \log(|\Sigma| \depth(Z))).
\end{equation*}
By above, we conclude that Part 1 of Algorithm \ref{alg:Step1_detailed} has time complexity 
\begin{equation*}
O( |\Sigma|^2 \depth(Z))+O(|\Sigma| \depth(Z) \log(|\Sigma| \depth(Z))).
\end{equation*}
This completes the proof.
\end{proof}
We continue by proving the time complexity of Part 2 of Algorithm~\ref{alg:Step1_detailed}.
\begin{proof}
We walk through Part 2 of of Algorithm~\ref{alg:Step1_detailed}. Line 15 has time complexity $O(1)$. Concerning line 16, provided we have $D_{path}$ and $\beta$, then one can directly get $\bar{B}$ and in turn $\mathrm{start}(\bar{B})$ by at most $\depth(Z)$ recursive calls starting from $\bar{B}$. Hence, the time complexity of line 16 is $O(\depth(Z))$. Line 17 has time complexity $O(1)$ provided we have $D_{path}$. Concerning line 18-27, the number of function calls that line 20 needs to do over the whole while-loop is $O(\depth(Z))$. The same is true for line 21. Line 22-25 has time complexity $O(1)$, hence contributes also with $O(\depth(Z))$ over the whole while-loop. Furthermore, evaluating line 19 has time complexity $O(E+V\log V) = O(b_s |\Sigma| + b_s \log b_s)$ (where $E$ is the number of arcs and $V$ is the number of (graph) nodes in the graph used in Dijkstra's algorithm), and hence, contributes with $O([b_s |\Sigma| + b_s \log b_s] \depth(Z))$ over the whole while-loop. We conclude that line 18-27 has time complexity $O([b_s |\Sigma| + b_s \log b_s] \depth(Z))$, and since line 28 has time complexity $O(1)$, we get that Part 2 of Algorithm \ref{alg:Step1_detailed} also has time complexity 
\begin{equation*}
O([b_s |\Sigma| + b_s \log b_s] \depth(Z)).
\end{equation*}
This completes the proof.
\end{proof}
\begin{proof}[Proof of Proposition \ref{prop:time_complexity_step_1_and_2}]
Proposition \ref{prop:time_complexity_step_1_and_2} follows by adding the time complexities of Part 1 and Part 2 of Algorithm~\ref{alg:Step1_detailed}.
\end{proof}

\subsection{Proof of Proposition \ref{prop:time_complexity_step_3}}

\begin{proof}
Using Algorithm \ref{alg:step_3}, note that it takes time $O(\depth(Z))$ to obtain the next optimal input in the optimal plan $u$ to $(Z,s_\mathrm{init},s_\mathrm{goal})$, since $\mathrm{Plan\_expansion}$ operates in a depth-first-search manner. Therefore, we can also bound the time complexity for saving the whole optimal plan $u$ by $O(\depth(Z) |u| )$.
\end{proof}

\fi

\end{document}